\documentclass[runningheads]{llncs}


\def\ourgrants{This work is part of a project that has received
  funding from the European Research Council (ERC) under the European
  Union's Horizon 2020 research and innovation programme (Grant
  agreement No.~852769, ARiAT). It is also supported by a Discovery
  Grant from the Natural Sciences and Engineering Research Council of
  Canada (NSERC). Parts of this research were carried out while the
  second author was affiliated with the Department of Computer
  Science, University College London, UK.}


\protected\def\ourthanks{\ourgrants\ \\ \\ }
\def\logoshift{44pt}

\usepackage{thm-restate}
\usepackage[hidelinks]{hyperref} 
\usepackage{tikzpagenodes}       
\newcommand{\logowidth}{20pt}    
\newcommand{\logos}{
  {\begin{tikzpicture}[remember picture, overlay]%
      \node[anchor=south east,inner sep=0pt,
            xshift=5+\logowidth, yshift=\logoshift]%
      at (current page text area.south east) {%
        \begin{tabular}{l}%
          \includegraphics[width=\logowidth]{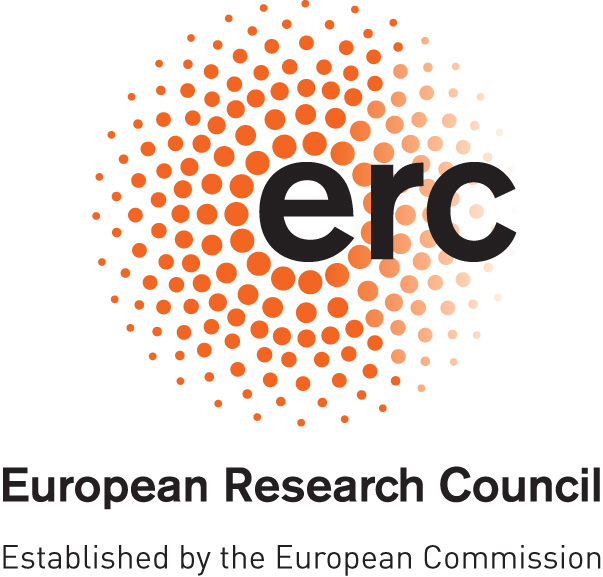} \\%
          \includegraphics[width=\logowidth]{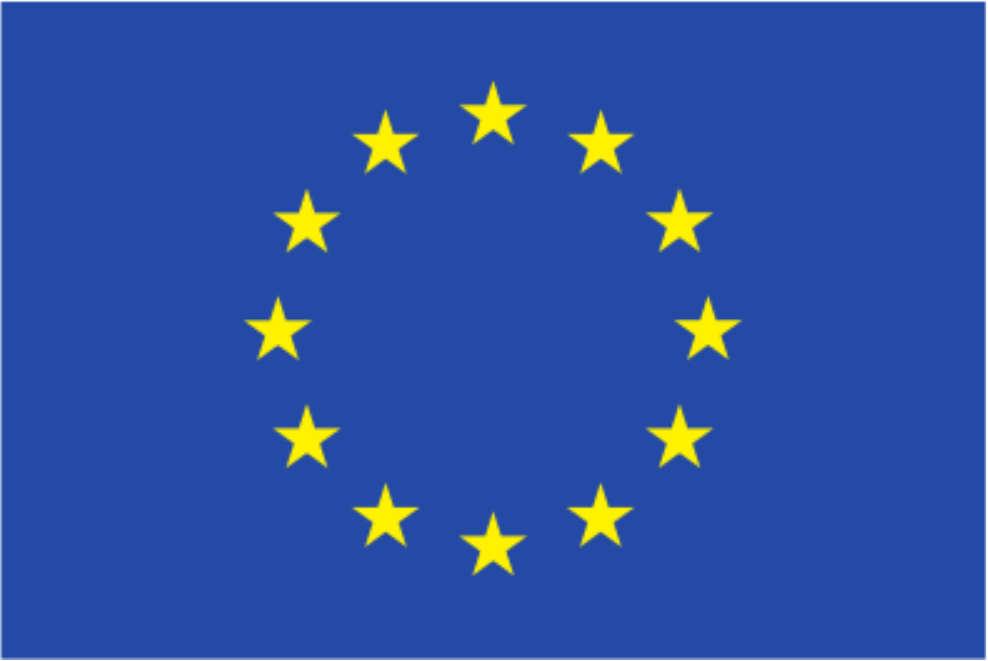}%
        \end{tabular}%
      };%
  \end{tikzpicture}}%
}

\usepackage{amsmath, amssymb}
\usepackage{xcolor}
\usepackage{cleveref}
\usepackage{listings}
\usepackage[noline, noend]{algorithm2e}
\let\G\relax            
\usepackage{complexity} 
\usepackage{macros}     
\usepackage{tikz}
\usepackage{pgfplots}
\pgfplotsset{compat=1.15}
\usepackage{mathtools}
\usepackage{tikz}
\usetikzlibrary{automata, positioning, calc, petri, hobby}
\usepackage{wrapfig}
\usepackage{booktabs}
\usepackage{multirow}

\begin{document}

\title{Directed Reachability for Infinite-State Systems\thanks{\ourthanks}}

\renewcommand{\orcidID}[1]{\href{https://orcid.org/#1}{\,\includegraphics[width=9pt]{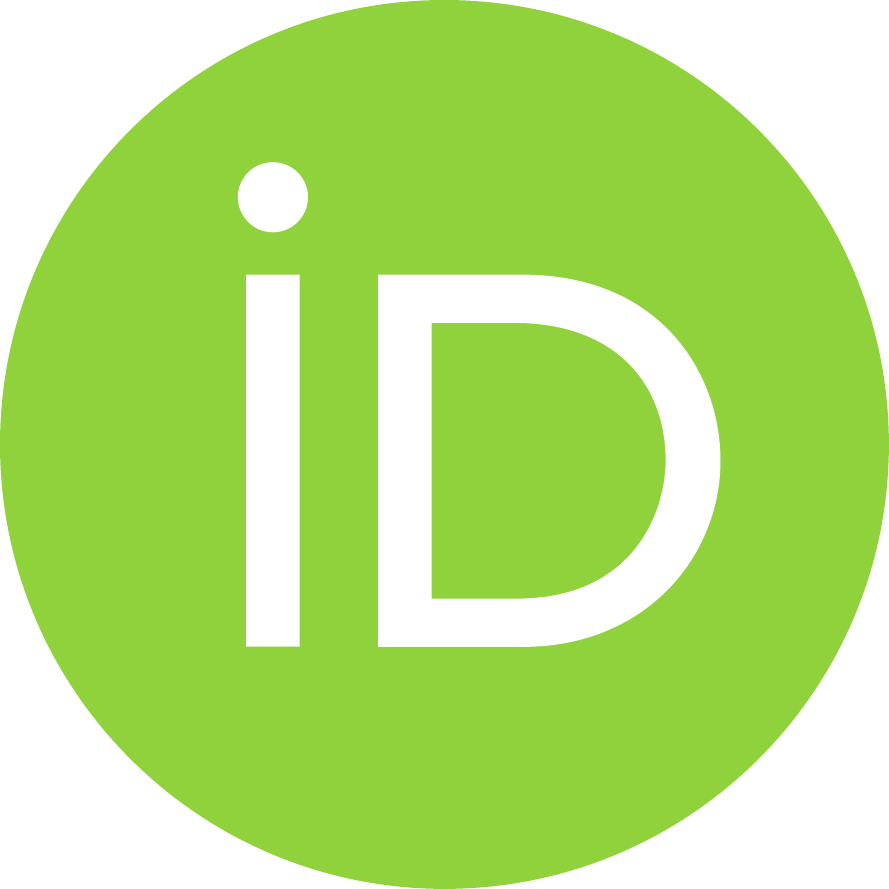}}} 

\author{Michael Blondin\inst{1}\orcidID{0000-0003-2914-2734} \and
Christoph Haase\inst{2}\orcidID{0000-0002-5452-936X} \and
Philip Offtermatt\inst{1,3}\orcidID{0000-0001-8477-2849}}

\authorrunning{M. Blondin et al.}

\institute{%
  Universit\'{e} de Sherbrooke, Canada \and  
  %
  University of Oxford, United Kingdom \and
  %
  Max Planck Institute for Software Systems, Saarbr\"{u}cken, Germany
}

\maketitle

\begin{abstract}
  Numerous tasks in program analysis and synthesis reduce to deciding
  reachability in possibly infinite graphs such as those induced by
  Petri nets. However, the Petri net reachability problem has recently
  been shown to require non-elementary time, which raises questions
  about the practical applicability of Petri nets as target models. In
  this paper, we introduce a novel approach for efficiently
  semi-deciding the reachability problem for Petri nets in
  practice. Our key insight is that computationally lightweight
  over-approximations of Petri nets can be used as distance oracles in
  classical graph exploration algorithms such as \astar\ and greedy
  best-first search. We provide and evaluate a prototype
  implementation of our approach that outperforms existing
  state-of-the-art tools, sometimes by orders of magnitude, and which
  is also competitive with domain-specific tools on benchmarks coming
  from program synthesis and concurrent program analysis.
  \keywords{Petri nets \and reachability \and shortest paths \and model checking}
\end{abstract}

\section{Introduction}
\label{sec:introduction}\logos%
Many problems in program analysis, synthesis and verification reduce
to deciding reachability of a vertex or a set of vertices in infinite
graphs, \eg, when reasoning about concurrent programs with an
unbounded number of threads, or when arbitrarily many components can
be used in a synthesis task. For automated reasoning tasks, those
infinite graphs are finitely represented by some mathematical model.
Finding the right such model requires a trade-off between the two
conflicting goals of maximal expressive power and computational
feasibility of the relevant decision problems. Petri nets are a
ubiquitous mathematical model that provides a good compromise between
those two goals. They are expressive enough to find a plethora of
applications in computer science, in particular in the analysis of
concurrent processes, yet the reachability problem for Petri nets is
decidable~\cite{May81,Kos82,Lam92,Ler12}. \emph{Counter abstraction}
has evolved as a generic abstraction paradigm that reduces a variety
of program analysis tasks to problems in Petri nets or variants
thereof such as well-structured transition systems, see
\eg~\cite{GS92,KKW14,ZP04,BKWZ13}. Due to their generality and
versatility, Petri nets and their extensions find numerous
applications also in other areas, including the design and analysis of
protocols~\cite{EGLM17}, business processes~\cite{Aal98}, biological
systems~\cite{HGD08,Cha07} and chemical systems~\cite{ADS07}. The goal
of this paper is to introduce and evaluate an efficient generic
approach to deciding the Petri net reachability problem on instances
arising from applications in program verification and synthesis.

A Petri net comprises a finite set of \emph{places} with a finite
number of \emph{transitions}. Places carry a finite yet unbounded
number of \emph{tokens} and transitions can remove and add tokens to
places. A \emph{marking} specifies how many tokens each place
carries. An example of a Petri net is given on the left-hand side of
\Cref{fig:ex:pn}, where the two places $\{p_1, p_2\}$ are depicted as
circles and transitions $\{t_1, t_2, t_3\}$ as squares. Places carry
tokens depicted as filled circles; thus $p_1$ carries one token and
$p_2$ carries none. We write this as $[p_1 \colon 1, p_2
  \colon 0]$, or $(1,0)$ if there is a clear ordering on the
places. Transition $t_1$ can add a single token to place $p_1$ at any
moment. As soon as a token is present in $p_1$, it can be consumed by
transition $t_2$, which then adds a token to place $p_2$ and puts back
one token to place $p_1$. Finally, transition $t_3$ consumes tokens
from $p_1$ without any adding token at all.
%
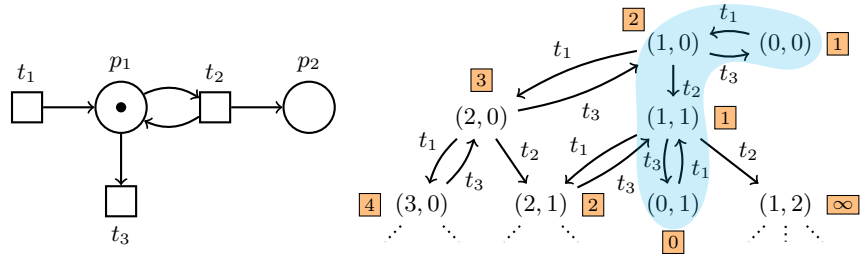
\begin{figure}[!h]
  \begin{minipage}[c]{.48\textwidth}
  \hspace*{5pt}
  \begin{tikzpicture}[auto, thick, node distance=1.25cm]
    \node[transition, label=above:{$t_1$}]              (t1) {};
    \node[place,      label=above:{$p_1$}, right of=t1, tokens=1] (p1) {};
    \node[transition, label=above:{$t_2$}, right of=p1] (t2) {};
    \node[place,      label=above:{$p_2$}, right of=t2] (p2) {};
    \node[transition, label=below:{$t_3$}, below of=p1] (t3) {};
    
    \path[->]
    (t1) edge[]             node[] {} (p1)
    (t2) edge[bend left=30] node[] {} (p1)
    (p1) edge[bend left=30] node[] {} (t2)
    (t2) edge[]             node[] {} (p2)
    (p1) edge[]             node[] {} (t3)
    ;
  \end{tikzpicture}%
  \vspace*{-10pt} 
\end{minipage}%
\hspace*{-45pt}
\begin{minipage}[c]{.48\textwidth}
  \vspace*{-8pt} 
  \begin{tikzpicture}[auto, thick, node distance=1cm
                      transform shape, scale=0.9]
    \newcommand{\mshift}{12pt}
    
    \node[]                                     (m0) {$(0, 0)$};
    \node[left       =              15pt of m0] (m1) {$(1, 0)$};
    \node[below left =\mshift   and 45pt of m1] (m2) {$(2, 0)$};
    \node[below      =\mshift   and      of m1] (m3) {$(1, 1)$};
    \node[below left =\mshift+5 and -5pt of m2] (m4) {$(3, 0)$};
    \node[below right=\mshift+5 and -5pt of m2] (m5) {$(2, 1)$};
    \node[below      =\mshift+5 and      of m3] (m6) {$(0, 1)$};
    \node[below right=\mshift+5 and 15pt of m3] (m7) {$(1, 2)$};

    \tikzstyle{hscore} = [draw, font=\scriptsize, thin, inner sep=2pt,
                          fill=orange!50];
    \newcommand{\hpos}{1.75pt}
    
    \node[hscore, right=\hpos of m0] {$1$};
    \node[hscore, above=\hpos of m1, xshift=-15pt, yshift=-5pt] {$2$};
    \node[hscore, above=\hpos of m2] {$3$};
    \node[hscore, right=\hpos of m3] {$1$};
    \node[hscore,  left=\hpos of m4] {$4$};
    \node[hscore, right=\hpos of m5] {$2$};
    \node[hscore, below=\hpos of m6] {$0$};
    \node[hscore, right=\hpos of m7] {$\infty$};
    
    \newcommand{\labshift}{-3pt}

    \path[->]
    (m0) edge[bend right=15] node[swap]                    {$t_1$} (m1)
    (m1) edge[bend right=15] node[swap]                    {$t_3$} (m0)
    (m1) edge[bend right=10] node[xshift=-\labshift, swap] {$t_1$} (m2)
    (m2) edge[bend right=10] node[xshift= \labshift, swap] {$t_3$} (m1)
    (m1) edge[]              node[yshift= \labshift]       {$t_2$} (m3)
    (m2) edge[bend right=15] node[xshift=-\labshift, swap] {$t_1$} (m4)
    (m4) edge[bend right=15] node[xshift= \labshift, swap] {$t_3$} (m2)
    (m2) edge[]              node[yshift= \labshift]       {$t_2$} (m5)
    (m3) edge[bend right=08] node[yshift= \labshift, swap] {$t_1$} (m5)
    (m5) edge[bend right=08] node[xshift= \labshift, swap] {$t_3$} (m3)
    (m3) edge[bend right=15] node[xshift=-\labshift, swap] {$t_3$} (m6)
    (m6) edge[bend right=15] node[yshift= \labshift, swap] {$t_1$} (m3)
    (m3) edge[]              node[yshift= \labshift]       {$t_2$} (m7)
    ;
    
    \newcommand{\mghost}{\phantom{($0, 0, 0)$}}
    \node[below  left=\mshift/2 and -10pt of m4] (d1) {\mghost};
    \node[below right=\mshift/2 and -10pt of m4] (d2) {\mghost};
    \node[below  left=\mshift/2 and -10pt of m5] (d3) {\mghost};
    \node[below right=\mshift/2 and -10pt of m5] (d4) {\mghost};
    \node[below  left=\mshift/2 and -10pt of m7] (d5) {\mghost};
    \node[below      =\mshift/2 and       of m7] (d6) {\mghost};
    \node[below right=\mshift/2 and -10pt of m7] (d7) {\mghost};
  
    \path[-, dotted]
    (m4) edge node {} (d1)
    (m4) edge node {} (d2)
    (m5) edge node {} (d3)
    (m5) edge node {} (d4)
    (m7) edge node {} (d5)
    (m7) edge node {} (d6)
    (m7) edge node {} (d7)
    ;

    \fill[cyan, opacity=0.2] ([shift={(0pt, 4.5pt)}]m0.north)
    to[closed, curve through={
        ([shift={ (0pt,  2.5pt)}]m1.north)
        ([shift={ (3pt,    0pt)}]m1.west)
        ([shift={ (0pt,    0pt)}]m3.west)
        ([shift={ (3pt,    0pt)}]m6.west)
        ([shift={ (0pt, -1.5pt)}]m6.south)
        ([shift={(-3pt,    0pt)}]m6.east)
        ([shift={(-2pt,    0pt)}]m3.east)
        ([shift={ (2pt,   -2pt)}]m1.south east)
        ([shift={ (0pt,   -2pt)}]m0.south)
        ([shift={ (0pt,    0pt)}]m0.east)
      }
    ]
    ([shift={(0pt, 4.5pt)}]m0.north);
  \end{tikzpicture}%
  \vspace*{-10pt} 
\end{minipage}%
  \caption{\emph{Left:} A Petri net $\pn$. \emph{Right:} Search of the
    forthcoming Algorithm~\ref{alg:directed-search} over the graph
    $G_\N(\pn)$ from $(0, 0)$ to $(0, 1)$, where $(x, y)$ denotes
    $[p_1\colon x, p_2\colon y]$ and each number in a box next to a
    marking is its heuristic value. Only the blue region is
    expanded.}\label{fig:search:ex}\label{fig:ex:pn}
\end{figure}

A Petri net induces a possibly infinite directed graph whose vertices
are markings, and whose edges are determined by the transitions of the
Petri net, \emph{cf.}\ the right side of \Cref{fig:ex:pn}. Given two
markings, the \emph{reachability problem} asks whether they are
connected in this graph. In \Cref{fig:ex:pn}, the marking $(0,1)$ is
reachable from $(0,0)$, \eg, via paths of lengths $3$ and~$5$: $(0,0)
\xrightarrow{t_1} (1,0) \xrightarrow{t_2} (1,1) \xrightarrow{t_3}
(0,1)$ and $(0,0) \xrightarrow{t_1} (1,0) \xrightarrow{t_1} (2,0)
\xrightarrow{t_2} (2,1) \xrightarrow{t_3} (1,1) \xrightarrow{t_3}
(0,1)$.

In practice, the Petri net reachability problem is a challenging
decision problem due to its horrendous worst-case complexity: an
exponential-space lower bound was established in the
1970s~\cite{Lip76}, and a non-elementary time lower bound has only
recently been established~\cite{CLLLM19}. One may thus question
whether a problem with such high worst-case complexity is of any
practical relevance, and whether reducing program analysis tasks to
Petri net reachability is anything else than merely an intellectual
exercise. We debunk those concerns and present a technique which
decides most reachability instances appearing in the wild. When
evaluated on large-scale instances involving Petri nets with thousands
of places and tens of thousands of transitions, our prototype
implementation is most of the time faster, even up to several orders
of magnitude on large-scale instances, and solves more instances than
existing state-of-the-art tools. Our implementation is also
competitive with specialized domain-specific tools. One of the biggest
advantages of our approach is that it is extremely simple to describe
and implement, and it readily generalizes to many extensions of Petri
nets. In fact, it was surprising to us that our approach has not yet
been discovered. We now describe the main observations and techniques
underlying our approach.

Ever since the early days of research in Petri nets, state-space
over-approxi\-ma\-tions have been studied to attenuate the high
computational complexity of their decision problems. One such
over-approxi\-ma\-tion is, informally speaking, to allow places to
carry a negative numbers of tokens. Deciding reachability then reduces
to solving the so-called \emph{state equation}, a system of linear
equations associated to a Petri net. Another over-approximation are
\emph{continuous Petri nets}, a variant where places carry fractional
tokens and ``fractions of transitions'' can be
applied~\cite{DA87}. The benefit is that deciding reachability drops
down to polynomial time~\cite{FH15}. While those approximations have
been applied for pruning search spaces, see
\eg~\cite{ELMMN14,ALW16,BFHH17,GLS18}, we make the following simple
key observation:
\begin{quotation}
  \emph{If a marking $\vec{m}$ is reachable from an initial marking in
    an over-approxi\-ma\-tion, then the length of a shortest
    witnessing path in the over-approxi\-ma\-tion lower bounds the
    length of a shortest path reaching $\vec{m}$.}
\end{quotation}
The availability of an oracle providing lower bounds on the length of
shortest paths between markings enables us to appeal to classical
graph traversal algorithms which have been highly successful in
artificial intelligence and require such oracles, namely \astar\ and
greedy best-first search, see \eg~\cite{RN09}. In particular,
determining the length of shortest paths in the over-approximations
described above can be phrased as optimization problems in (integer)
linear programming and optimization modulo theories, for which
efficient off-the-shelf solvers are available~\cite{Gur20,BPF15}.
Thus, oracle calls can be made at comparably modest computational
cost, which is crucial for the applicability of those algorithms. As a
result, a large class of existing state-space over-approximations can
be applied to obtain a highly efficient forward-analysis semi-decision
procedure for the reachability problem. For example, in
\Cref{fig:search:ex}, using the state equation as distance oracle,
\astar\ only explores the four vertices in the blue region and
directly reaches the target vertex, whereas a breadth-first search may
need to explore all vertices of the figure and a depth-first search
may even not terminate.

In theory, our approach could be turned into a decision procedure by
applying bounds on the length of shortest paths in Petri
nets~\cite{LS15}. However, such lengths can grow non-elementarily in
the number of places~\cite{CLLLM19}, and just computing the cut-off
length will already be infeasible for any Petri net of practical
relevance. It is worth mentioning that, in practice, it has been
observed that the over-approximations we employ also often witness
non-reachability though, see \eg~\cite{ELMMN14}. Still, when dealing
with finite state spaces, our procedure is complete.

A noteworthy benefit of our approach is that it enables finding
\emph{shortest} paths when \astar\ is used as the underlying
algorithm. In program analysis, paths usually correspond to traces
reaching an erroneous configuration. In this setting, shorter error
traces are preferred as they help understanding why a certain error
occurs. Furthermore, in program synthesis, paths correspond to
synthesis plans. Again, shorter paths are preferred as they yield
shorter synthesized programs. In fact, we develop our algorithmic
framework for \emph{weighted} Petri nets in which transitions are
weighted with positive integers. Classical Petri nets correspond to
the special instance where all weights are equal to one. Weighted Petri nets
are useful to reflect cost or preferences in synthesis tasks. For
example, there are program synthesis approaches where software
projects are mined to determine how often API methods are called to
guide a procedure by preferring more frequent
methods~\cite{GRBHS14,Gal14,LDZ19}. Similarity metrics can also be
used to obtain costs estimating the relevance of invoking
methods~\cite{FMWDR17}. It has further been argued that weighted Petri
nets are a good model for synthesis tasks of chemical reactions as
they can reflect costs of various chemical
compounds~\cite{WWB17}. Finally, weights can be viewed as representing
an amount of time it takes to fire a transition, see \eg~\cite{Mur89}.

\paragraph*{Related work.}

Our approach falls under the umbrella term \emph{directed model
  checking} coined in the early 2000s, which refers to a set of
techniques to tackle the state-explosion problem via guided
state-space exploration. It primarily targets disproving safety
properties by quickly finding a path to an error state without the
need to explicitly construct the whole state space. As such, directed
model checking is useful for bug-finding since, in the words of Yang
and Dill~\cite{YD98}, \emph{in practice, model checkers are most
  useful when they find bugs, not when they prove a property}. The
survey paper~\cite{ESBWFA09} gives an overview over various directed
model checking techniques for finite-state systems.

For Petri nets, directed reachability algorithms based on
over-approximations as developed in this work have not been
described. In~\cite{UP93}, it is argued that exploration heuristics,
like \astar, can be useful for Petri nets, but they do not consider
over-approximations for the underlying heuristic functions. The
authors of~\cite{JC98} use Petri nets for scheduling problems and
employ the state equation, viewed as a system of linear equations over
$\Q$, in order to explore and prune reachability graphs. This approach
is, however, not guaranteed to discover shortest paths. There has been
further work on using \astar\ for exploring the reachability graph of
Petri nets for scheduling problems, see, \eg, \cite{LD94,MO05} and the
references therein.




\section{Preliminaries}
\label{sec:preliminaries}
Let $\N \defeq \{0, 1, \ldots\}$. For all $\D \subseteq \Q$ and
${\succ} \in \{\geq, >\}$, let $\D_{\succ 0} \defeq \{a \in \D : a
\succ 0\}$, and for every set $X$, let $\D^X$ denote the set of
vectors $\D^X \defeq \{\vec{v} \mid \vec{v} \colon X \to \D\}$. We
naturally extend operations componentwise. In particular, $(\vec{u} +
\vec{v})(x) \defeq \vec{u}(x) + \vec{v}(x)$ for every $x \in X$, and
$\vec{u} \geq \vec{v}$ iff $\vec{u}(x) \geq \vec{v}(x)$ for every $x
\in X$.

\paragraph*{Graphs.}

A \emph{(labeled directed) graph} is a triple $G = (V, E, A)$, where
$V$ is a set of \emph{nodes}, $A$ is a finite set of elements called
\emph{actions}, and $E \subseteq V \times A \times V$ is the set of
\emph{edges} labeled by actions. We say that $G$ has \emph{finite
  out-degree} if the set of outgoing edges $\{(w, a, w') \in E : w =
v\}$ is finite for every $v \in V$. Similarly, it has \emph{finite
  in-degree} if the set of ingoing edges is finite for every $v \in
V$. If $G$ has both finite out- and in-degree, then we say that $G$ is
\emph{locally finite}. A \emph{path} $\path$ is a finite sequence of
nodes ${(v_i)}_{1 \le i \le n}$ and actions ${(a_i)}_{1 \leq i < n}$
such that $(v_i, a_i, v_{i+1}) \in E$ for all $1 \le i < n$. We say
that $\path$ is a \emph{path from $v$ to $w$} (or a \emph{$v$-$w$
  path}) if $v = v_1$ and $w = v_n$, and its \emph{label} is $a_1 a_2
\cdots a_{n-1}$, where $\varepsilon$ denotes the empty sequence.

A \emph{weighted} graph is a tuple $G = (V, E, A, \mu)$ where $(V, E,
A)$ is a graph with a \emph{weight function} $\mu \colon E \to
\Qpos$. The \emph{weight} of path $\path$ is the weight of its edges,
\ie\ $\mu(\path) \defeq \sum_{1 \le i < n} \mu(v_i, a_i, v_{i+1})$. A
\emph{shortest path from $v$ to $w$} is a $v$-$w$ path $\path$
minimizing $\mu(\path)$. We define $\dist_G \colon V \times V \to
\Qnon \cup \{\infty\}$ as the \emph{distance function} where
$\dist_G(v, w)$ is the weight of a shortest path from $v$ to $w$, with
$\dist_G(v, w) \defeq \infty$ if there is none.
We assume throughout the paper that weighted graphs have a
\emph{minimal weight}, \ie\ that $\min\{\mu(e) : e \in E\}$
exists. For graphs with finite out-degree, this ensures that if a path
exists between two nodes, then a shortest one
exists.\footnote{Otherwise, there could be increasingly better paths,
  \eg\ of weights $1, 1/2, 1/4, \ldots$.} This mild assumption always
holds in our setting.

\paragraph*{Petri nets.}

A \emph{weighted Petri net} is a tuple $\pn = (P, T, f, \lambda)$
where
\begin{itemize}
\item $P$ is a finite set whose elements are called \emph{places},
  
\item $T$ is a finite set, disjoint from $P$, whose elements are
  called \emph{transitions},

\item $f \colon (P \times T) \cup (T \times P) \to \N$ is the
  \emph{flow function} assigning multiplicities to arcs connecting
  places and transitions, and

\item $\lambda \colon T \to \Qpos$ is the \emph{weight function}
  assigning weights to transitions.
\end{itemize}
A \emph{marking} is a vector $\vec{m} \in \N^P$ which indicates that
place $p$ holds $\vec{m}(p)$ \emph{tokens}. A weighted Petri net with
$\lambda(t) = 1$ for each $t \in T$ is called a \emph{Petri net}. For
example, \Cref{fig:ex:pn} depicts a Petri net $\pn$ with $P = \{p_1,
p_2\}$, $T = \{t_1, t_2, t_3\}$, $f(p_1, t_3) = \allowbreak f(p_1,
t_2) = \allowbreak f(t_1, p_1) = f(t_2, p_1) = f(t_2, p_2) = 1$
(multiplicity omitted on arcs) and $f(-, -) = 0$ elsewhere (no
arc). Moreover, $\pn$ is marked with $[p_1 \colon 1, p_2 \colon 0]$.

The \emph{guard} and \emph{effect} of a transition $t \in T$ are
vectors $\guard{t} \in \N^p$ and $\effect{t} \in \Z^p$ where
$\guard{t}(p) \defeq f(p, t)$ and $\effect{t}(p) \defeq f(t, p) - f(p,
t)$. We say that $t$ is \emph{firable} from marking $\vec{m}$ if
$\vec{m} \geq \guard{t}$. If $t$ is firable from $\vec{m}$, then it
may be \emph{fired}, which leads to marking $\vec{m}' \defeq \vec{m} +
\effect{t}$. We write this as $\vec{m} \reach{t} \vec{m}'$. These
notions naturally extend to sequences of transitions,
\ie\ $\reach{\varepsilon}$ denotes the identity relation over $\N^P$,
$\effect{\varepsilon} \defeq \vec{0}$, $\lambda(\varepsilon) \defeq
0$, and for every $t_1, t_2, \ldots, t_k \in T$: $\effect{t_1 t_2
  \cdots t_k} \defeq \effect{t_1} + \effect{t_2} + \cdots +
\effect{t_k}$, $\lambda(t_1 t_2 \cdots t_k) \defeq \lambda(t_1) +
\lambda(t_2) + \cdots + \lambda(t_k)$, and
\begin{align*}
  \reach{t_1 t_2 \cdots t_k}
  &\defeq\ \reach{t_k} \circ \cdots \circ \reach{t_2} \circ \reach{t_1},
\end{align*}
We say that $\reach{} \defeq \cup_{t \in T} \reach{t}{}$ and
$\reach{*} \defeq \cup_{\sigma \in T^*} \reach{\sigma}{}$ are the
\emph{step} and \emph{reachability} relations. Note that the latter is
the reflexive transitive closure of $\reach{}$.

For example, $\vec{m} \reach{t_2 t_3} \vec{m}'$ and $\vec{m}
\reach{t_1 t_2 t_3 t_3} \vec{m}'$ in \Cref{fig:ex:pn}, where $\vec{m}
\defeq [p_1 \colon 1, p_2 \colon 0]$ and $\vec{m}' \defeq [p_1 \colon
  0, p_2 \colon 1]$. Moreover, $t_2$ is not firable in $\vec{m}'$.

Given a sequence $\sigma \in T^*$, denote by $|\sigma|_t \in \N$ the
number of times transition $t$ occurs in $T$. The \emph{Parikh image}
of $\sigma$ is the vector $\vec{\sigma} \in \N^T$ that captures the
number of occurrences of transitions appearing in $\sigma$,
\ie\ $\vec{\sigma}(t) \defeq |\sigma|_t$ for all $t\in T$.

Each weighted Petri net $\pn = (P, T, f, \lambda)$ induces a locally
finite weighted graph $G_\N(\pn) \defeq (V, E, T, \mu)$, called its
\emph{reachability graph}, where $V \defeq \N^P$, $E \defeq
\{(\vec{m}, t, \vec{m}') : \vec{m} \reach{t} \vec{m}'\}$ and
$\mu(\vec{m}, t, \vec{m}') \defeq \lambda(t)$ for each $(\vec{m}, t,
\vec{m}') \in E$. An example of a reachability graph is given on the
right of \Cref{fig:ex:pn}. We write $\dist_\pn$ to denote
$\dist_{G_\N(\pn)}$. We have $\dist_\pn(\vec{m}, \vec{m}') \neq
\infty$ iff $\vec{m} \reach{\sigma} \vec{m}'$ for some $\sigma \in
T^*$, and if the latter holds, then $\dist_\pn(\vec{m}, \vec{m}')$ is
the minimal weight among such firing sequences $\sigma$. Moreover, for
(unweighted) Petri nets, $\dist_\pn(\vec{m}, \vec{m}')$ is the minimal
number of transitions to fire to reach $\vec{m}'$ from $\vec{m}$.


\section{Directed search algorithms}
\label{sec:search}
Our approach relies on classical pathfinding procedures guided by node
selection strategies. Their generic scheme is described in
Algorithm~\ref{alg:directed-search}. Its termination with a value
$d\neq \infty$ indicates that the weighted graph $G = (V, \allowbreak E,
\allowbreak A, \mu)$ has a path from $s$ to $t$ of weight $d$, whereas
termination with $d=\infty$ signals that $\dist_G(s, t) = \infty$.
\begin{wrapfigure}[14]{l}{0.6\textwidth}
\vspace*{-16pt}
\begin{algorithm}[H]
  \DontPrintSemicolon
  \LinesNumbered
  \everypar={\nl}
  
  $g := [s \mapsto 0, v \mapsto \infty : v \neq s]$\;
  $C := \{ s \}$\;
  
  \While{$C \neq \emptyset$}{
    $v := \argmin_{v \in C} S(g,v)$\;\label{ln:strat}
    \lIf{$v = t$\label{ln:target:if}}{
      \Return $g(t)$\label{ln:target:ret}
    }
    \For{$(v, a, w) \in E$\label{ln:successors}}{
      \If{$g(v) + \mu(v,a,w) < g(w)$}{
        $\mathmakebox[20pt][l]{g(w)} := g(v) + \mu(v,a,w)$\;\label{ln:update:g}
        $\mathmakebox[20pt][l]{C}    := C \cup \{ w \}$\;
      }
    }
    
    $C := C \setminus \{ v \}$\;
  }
  
  \Return $\infty$\;
  
  \caption{Directed search algorithm.}\label{alg:directed-search}
\end{algorithm}
\end{wrapfigure}
Algorithm~\ref{alg:directed-search} maintains a set of \emph{frontier
  nodes} $C$ and a mapping $g \colon V \to \Qnon \cup \{\infty\}$ such
that $g(w)$ is the weight of the best known path from $s$ to $w$. In
Line~\ref{ln:strat}, a \emph{selection strategy} $S$ determines which
node $v$ to \emph{expand} next. Starting from
Line~\ref{ln:successors}, a successor $w$ of $v$ is added to the
frontier if its distance improves.

Let $h \colon V \to \Qnon \cup \{\infty\}$ estimate the distance from
all nodes to a target $t \in V$. The selection strategies sending $(g,
v)$ respectively to $g(v)$, $g(v) + h(v)$ or $h(v)$ yield the
classical Dijkstra's, \astar{} and greedy best-first search
(\emph{GBFS}) algorithms.

When instantiating $S$ with Dijkstra's selection strategy, a return
value $d\neq \infty$ is guaranteed to equal $\dist_G(s, t)$. This is
not true for \astar\ and GBFS\@. However, if $h$ fulfills the following
\emph{consistency} properties, then \astar\ also has this guarantee:
$h(t) = 0$ and $h(v) \le \mu(v, a, w) + h(w)$ for every $(v, a, w) \in
E$ (see, \eg, \cite{RN09}).

In the setting of infinite graphs, unlike GBFS, \astar\ and Dijkstra's selection
strategies guarantee termination if $\dist_G(s, t) \neq \infty$.
Yet, we introduce \emph{unbounded heuristics} for which termination is
also guaranteed for GBFS\@.
Note that these guarantees would vanish in the presence of
zero weights. An \emph{infinite path} $\pi$ is a sequence of nodes
${(v_i)}_{i \in \N}$ and actions ${(a_i)}_{i \in \N}$ such that $(v_i,
a_i, v_{i+1}) \in E$ for all $i \in \N$. We say that $\pi$ is
\emph{bounded} w.r.t.\ $h$ if its nodes are pairwise distinct and
there exists $b \in \Qnon$ with $h(v_i) \leq b$ for all $i \geq 0$. We
say that $h$ is \emph{unbounded} if it admits no bounded sequence.
The following technical lemma enables to prove termination of GFBS in
the presence of unbounded heuristics.

\begin{restatable}{lemma}{LemGreedy}\label{lem:greedy}
  If $G$ is locally finite and $h$ is unbounded, then the following
  holds:
  \begin{enumerate}
  \item The set of paths of weight at most $c \in \Qnon$ starting from
    node $s$ is finite.\label{itm:path:fin}

  \item Let $W \subseteq V$. The set $\dist_G(W, t) \defeq
    \{\dist_G(w, t) : w \in W\}$ has a minimum.\label{itm:dist:min}

  \item No node is expanded infinitely often by
    Algorithm~\ref{alg:directed-search}.\label{itm:fin:exp}
  \end{enumerate}
\end{restatable}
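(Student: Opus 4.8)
The plan is to derive all three items from two facts available here: $G$ has finite out- and in-degree, and (by the paper's standing assumption, together with $\mu$ taking values in $\Qpos$) $E$ has a \emph{strictly positive} minimal weight $w_{\min} \defeq \min\{\mu(e) : e \in E\} > 0$. In fact the unboundedness of $h$ is not needed for this lemma: local finiteness and positivity of $w_{\min}$ already suffice, the former presumably being recorded for use in the downstream termination argument for GBFS. For item~\ref{itm:path:fin}: every edge has weight at least $w_{\min} > 0$, so a path of weight at most $c$ uses at most $\lfloor c/w_{\min}\rfloor$ edges. The $s$-paths of length at most a fixed $k$ are precisely the nodes of the rooted tree whose root is the trivial path $(s)$ and in which the children of a path ending at $v$ are its one-edge prolongations along the outgoing edges of $v$; finite out-degree makes this tree finitely branching, and bounded depth then makes it finite. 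Hence only finitely many $s$-paths have weight at most $c$.

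For item~\ref{itm:dist:min}, assume $W \neq \emptyset$ (otherwise the statement is vacuous). If $\dist_G(w,t) = \infty$ for every $w \in W$, the minimum is $\infty$; otherwise fix $w_0 \in W$ with $d \defeq \dist_G(w_0,t) < \infty$. Applying the reasoning of item~\ref{itm:path:fin} to the reverse graph $\overleftarrow{G}$ (same nodes, edges reversed, same weights, hence still locally finite with minimal weight $w_{\min}$) with source $t$ shows that only finitely many nodes $w$ satisfy $\dist_G(w,t) = \dist_{\overleftarrow{G}}(t,w) \le d$. Thus $W' \defeq \{w \in W : \dist_G(w,t) \le d\}$ is finite and nonempty, so $\dist_G(W',t)$ is a finite nonempty subset of $\Qnon$ with a minimum $m \le d$; and every $w \in W \setminus W'$ satisfies $\dist_G(w,t) > d \ge m$, whence $m = \min \dist_G(W,t)$.

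For item~\ref{itm:fin:exp}, I would first record two invariants of Algorithm~\ref{alg:directed-search}, both by a routine induction over its iterations: (a) whenever $g(v) < \infty$, the value $g(v)$ equals the weight of some $s$-$v$ path, and $g$ is non-increasing over time --- immediate from Line~\ref{ln:update:g}, which assigns $g(w) := g(v) + \mu(v,a,w)$ (the weight of the prolongation of a witnessing $s$-$v$ path) and only when this strictly decreases $g(w)$; and (b) $v \in C$ implies $g(v) < \infty$, since nodes enter $C$ only alongside a finite $g$-value, the sole exception being $s$ with $g(s) = 0$. Now suppose, for contradiction, that some node $v$ is expanded infinitely often. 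Every expansion of $v$ removes it from $C$, and $v$ can re-enter $C$ only through an execution of Line~\ref{ln:update:g} that strictly decreases $g(v)$ --- note that a self-loop at $v$ never triggers this, its weight being positive --- so $g(v)$ strictly decreases between any two consecutive expansions. Let $c_0 < \infty$ be the value of $g(v)$ at its first expansion; every later value of $g(v)$ is the weight of an $s$-$v$ path of weight below $c_0$, and by item~\ref{itm:path:fin} there are only finitely many such weights. A strictly decreasing sequence cannot live in a finite set, a contradiction.

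The only part calling for any care is item~\ref{itm:fin:exp}: the crux is the two bookkeeping invariants on $g$ together with the observation that re-expansion of $v$ forces $g(v)$ to strictly drop, which is exactly what lets item~\ref{itm:path:fin} cap the number of expansions. Items~\ref{itm:path:fin} and~\ref{itm:dist:min} are then routine once the minimal-weight bound on path length is in hand, modulo the degenerate cases $E = \emptyset$ or $W = \emptyset$, which are trivial.
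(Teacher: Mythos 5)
Your proof is correct and follows essentially the same route as the paper's: the positive minimal edge weight bounds the number of edges on a path of weight at most $c$, finite out-/in-degree then gives finiteness, and item~3 follows from the strict decrease of $g(v)$ between consecutive expansions combined with item~1 and the invariant that finite $g$-values are weights of $s$-$v$ paths. The only cosmetic difference is that you prove item~2 directly (restricting attention to the finitely many nodes within distance $d$ of $t$ in the reverse graph) where the paper argues by contradiction with an infinite strictly decreasing chain of distances; your observation that unboundedness of $h$ is not actually used is also consistent with the paper's proof.
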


\begin{theorem}\label{thm:greedy:termination}
  Algorithm~\ref{alg:directed-search} with the greedy best-first
  search selection strategy always finds reachable targets for
  locally finite graphs and unbounded heuristics.
\end{theorem}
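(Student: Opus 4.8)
The plan is to reduce the statement to two ingredients: the classical ``a shortest path always touches the frontier'' invariant, and K\"onig's lemma applied to the search tree of the run. Fix $s,t$ with $\dist_G(s,t)=c<\infty$ and a shortest $s$-$t$ path $\pi=v_0v_1\cdots v_k$; writing $c_i:=\mu(v_0\cdots v_i)$, every prefix of $\pi$ is again a shortest path, so $c_i=\dist_G(s,v_i)$ for all $i$. First I would record the routine invariants of Algorithm~\ref{alg:directed-search}, all immediate inductions on the number of iterations: the map $g$ is pointwise non-increasing over time; $g(v)\ge\dist_G(s,v)$ at every moment; a node is inserted into $C$ exactly when line~\ref{ln:update:g} lowers its $g$-value; and $g(w)<\infty$ implies there is an $s$-$w$ path of weight $g(w)$, so every intermediate $g$-value and the final return value is a genuine path weight.

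Next I would prove the \emph{frontier lemma}: at the start of each iteration at which the algorithm has not yet returned, some $v_i$ lies in $C$ with $g(v_i)=c_i$. This holds for any selection strategy and goes by induction on iterations. Base case: $C=\{s\}$ and $g(s)=0=c_0$. Step: let $v^\star$ be the node selected and expanded. If $v^\star\ne v_i$ then $v_i$ stays in $C$ and keeps its value (it already equals $\dist_G(s,v_i)\le g(v_i)$). If $v^\star=v_i$ with $i=k$ the algorithm returns on line~\ref{ln:target:ret}, so this iteration does not count. If $v^\star=v_i$ with $i<k$, then relaxing the edge $(v_i,a_i,v_{i+1})$ forces $g(v_{i+1})\le c_i+(c_{i+1}-c_i)=c_{i+1}$, hence $g(v_{i+1})=c_{i+1}$; a short cascade finishes: take the largest index $j$ currently satisfying $g(v_j)=c_j$, and if $v_j\notin C$ observe it must have been expanded, which (when $j<k$) propagates $g(v_{j+1})=c_{j+1}$ and contradicts maximality of $j$, so $v_j\in C$. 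The frontier lemma in particular keeps $C$ nonempty until the algorithm returns, so the algorithm never returns $\infty$; hence if it terminates, it terminates on line~\ref{ln:target:ret} with a finite value $g(t)$, which is exactly ``finding'' the reachable target.

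It remains to prove termination, and this is where unboundedness of $h$ and local finiteness are used. Suppose the algorithm runs forever; then there are infinitely many iterations, and since by Lemma~\ref{lem:greedy}, item~\ref{itm:fin:exp}, no node is expanded infinitely often, infinitely many \emph{distinct} nodes are expanded. Form the search tree $\mathcal T$ whose nodes are $s$ together with every node ever inserted into $C$, the parent of $w\ne s$ being the node during whose expansion $w$ was first inserted. Then $\mathcal T$ is a tree rooted at $s$; it is finitely branching because the children of a node lie among its $G$-successors (finite out-degree); and it is infinite, as it contains all expanded nodes. By K\"onig's lemma $\mathcal T$ has an infinite simple path $s=u_0,u_1,u_2,\dots$, and each $u_j$ was expanded, since a node acquires a child only while being expanded. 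To bound $h$ along this path, let $i^\star$ be the largest index ever witnessing the frontier lemma; a monotonicity argument shows that once $v_{i^\star}$ sits in $C$ with $g(v_{i^\star})=c_{i^\star}$ it is never selected again — selecting it would either return (impossible) or force $v_{i^\star+1}$ into $C$ with $g(v_{i^\star+1})=c_{i^\star+1}$, contradicting maximality of $i^\star$ — so from some iteration on $v_{i^\star}$ permanently sits in $C$, whence every node expanded thereafter has $h$-value at most $h(v_{i^\star})=:\beta$. Here one uses that $h$ is finite on $\pi$, which holds for the distance-oracle heuristics used in this paper, as they lower-bound $\dist_G(\cdot,t)$, which is finite on $\pi$; so $\beta<\infty$. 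Since each node is expanded only finitely often, all but finitely many of the distinct expanded nodes $u_j$ satisfy $h(u_j)\le\beta$, so a tail $u_N,u_{N+1},\dots$ is an infinite simple path with uniformly bounded $h$, i.e.\ a bounded sequence — contradicting that $h$ is unbounded.

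The main obstacle is this last step: unboundedness is phrased via infinite \emph{simple} paths, so one cannot merely extract an infinite walk from the run but must route the argument through the search tree and K\"onig's lemma, using local finiteness for finite branching. Two smaller points need care: the cascade inside the frontier lemma, which is needed precisely when the prospective frontier node has already been expanded, and the hypothesis that $h$ is finite wherever $t$ is reachable — without it the statement actually fails, so one should make it explicit (it is automatic for the lower-bound heuristics used here).
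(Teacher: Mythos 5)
Your proof is correct and arrives at the same kind of contradiction as the paper's, but by a genuinely different route. The paper works directly with the subgraph $K$ of all expanded nodes: K\"onig's lemma together with \Cref{lem:greedy}\eqref{itm:fin:exp} yields an infinite path of distinct expanded nodes, and \Cref{lem:greedy}\eqref{itm:dist:min} produces a node $w$ of $K$ minimizing $\dist_G(\cdot,t)$, whose strictly closer successor $w'$ enters $C$ but is never expanded; unboundedness along the infinite path then contradicts GBFS having preferred some $v_j$ with $h(v_j)>h(w')$ over $w'$. You instead obtain the permanently-in-$C$ node with small $h$-value from the classical Dijkstra/\astar\ frontier invariant along a fixed shortest $s$-$t$ path, so you never need \Cref{lem:greedy}\eqref{itm:dist:min}, and you route K\"onig's lemma through the search tree rather than through $K$ (both work; local finiteness gives finite branching either way). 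Your decomposition buys two things the paper's proof leaves implicit: it rules out termination with $C=\emptyset$ (the paper's ``hence $K$ is infinite'' tacitly assumes the loop never exits that way), and it certifies that the returned value is the weight of an actual path. The price is the cascade inside the frontier lemma, which is the fiddly part and which you handle correctly. Your remark about finiteness of $h$ on the shortest path is also well taken: the paper's step ``there exists $j$ with $h(v_j)>h(w')$'' equally requires $h(w')<\infty$, which holds for every heuristic derived from a distance under-approximation (since $h(w')\le\dist_G(w',t)<\infty$) but not for an arbitrary unbounded $h$; making that hypothesis explicit is the honest reading of the theorem as it is actually applied in \Cref{ssec:directed:safety}.
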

\begin{proof}
   First observe that Algorithm~\ref{alg:directed-search} satisfies
   this invariant:
   \begin{multline}
     \text{if $g(v) \neq \infty$, then $g(v)$ is the weight of a path
       from $s$ to $v$ in $G$} \\[-3pt] \text{whose nodes were all
       expanded, except possibly $v$}.\tag*{($\ast$)}\label{eq:g:path}
   \end{multline}
  
  Assume $\dist_G(s, t) \neq \infty$. For the sake of contradiction,
  suppose $t$ is never expanded. Let $K_i$ be the subgraph of $G$
  induced by nodes expanded at least once within the first $i$
  iterations of the \textbf{while} loop. In particular, $K_1$ is the
  graph made only of node $s$. Let $K = K_1 \cup K_2 \cup \cdots$. By
  \Cref{lem:greedy}~\eqref{itm:fin:exp}, no node is expanded
  infinitely often, hence $K$ is infinite. Moreover, $K$ has finite
  out-degree, and each node of $K$ is reachable from $s$ in $K$
  by~\ref{eq:g:path}. Thus, by K\"{o}nig's lemma, $K$ contains an
  infinite path $v_0, v_1, \ldots \in V$ of pairwise distinct nodes.

  Let $w$ be a node of $K$ minimizing $\dist_G(w, t)$. It is
  well-defined by \Cref{lem:greedy}~\eqref{itm:dist:min}. We have
  $\dist_G(w, t) \neq \infty$ as $t$ is reachable from $s$ and the
  latter belongs to $K_1 \subseteq K$. By minimality of $w \neq t$,
  there exists an edge $(w, a, w')$ of $G$ such that $\dist_G(w', t) <
  \dist_G(w, t)$ and $w'$ does not appear in $K$. Note that $w'$ is
  added to $C$ at some point, but is never expanded as it would
  otherwise belong to $K$. Let $i$ be the smallest index such that $w$
  belongs to $K_i$. Since $h$ is unbounded, there exists $j$ such that
  $h(v_j) > h(w')$ and $v_j$ is expanded after iteration $i$ of the
  while loop. This is a contradiction as $w'$ would have been expanded
  instead of $v_j$.\qed
\end{proof}


\section{Directed reachability}
\label{sec:safety}
In this section, we explain how to instantiate
Algorithm~\ref{alg:directed-search} for finding short(est) firing
sequences witnessing reachability in weighted Petri nets. Since
Dijkstra's selection strategy does not require any heuristic, we focus
on \astar\ and greedy best-first search which require consistent and
unbounded heuristics. More precisely, we introduce distance
under-approxi\-ma\-tions (\Cref{ssec:relaxations}); present relevant
concrete distance under-approxi\-ma\-tions (\Cref{ssec:pn:relax}); and
put everything together into our framework
(\Cref{ssec:directed:safety}).

\subsection{Distance under-approximations}\label{ssec:relaxations}

A \emph{distance under-approximation} of a weighted Petri net $\pn
= (P, T, f, \lambda)$ is a function
$d \colon \N^P \times \N^P \to \Qnon
\cup \{\infty\}$ such that for all $\vec{m}, \vec{m}', \vec{m}'' \in \N^P$:
\begin{itemize}
\item $d(\vec{m}, \vec{m}') \leq \dist_\pn(\vec{m}, \vec{m}')$,

\item $d(\vec{m}, \vec{m}'') \leq d(\vec{m}, \vec{m}') + d(\vec{m}',
  \vec{m}'')$ (\emph{triangle inequality}), and

\item $d$ is effective, \ie\ there is an algorithm that evaluates $d$
  on all inputs.
\end{itemize}

We naturally obtain a heuristic from $d$ for a directed search towards
marking $\vec{m}_\text{target}$. Indeed, let
$h \colon \N^P \to \Qnon \cup \{\infty\}$ be defined by $h(\vec{m})
\defeq d(\vec{m}, \vec{m}_\text{target})$. The following proposition
shows that $h$ is a suitable heuristic for \astar:
\begin{proposition}\label{prop:h:consistent}
  Mapping $h$ is a consistent heuristic.
\end{proposition}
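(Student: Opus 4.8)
The plan is to verify directly the two consistency conditions recalled in \Cref{sec:search}: that $h(\vec{m}_\text{target}) = 0$, and that $h(\vec{m}) \le \mu(\vec{m}, t, \vec{m}') + h(\vec{m}')$ for every edge $(\vec{m}, t, \vec{m}')$ of the reachability graph $G_\N(\pn)$, keeping in mind that $\mu(\vec{m}, t, \vec{m}') = \lambda(t)$.

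For the first condition, I would observe that the empty firing sequence $\varepsilon$ is a path from $\vec{m}_\text{target}$ to itself of weight $\lambda(\varepsilon) = 0$, so $\dist_\pn(\vec{m}_\text{target}, \vec{m}_\text{target}) = 0$. Since $d$ is a distance under-approximation, $d(\vec{m}_\text{target}, \vec{m}_\text{target}) \le \dist_\pn(\vec{m}_\text{target}, \vec{m}_\text{target}) = 0$; as $d$ takes values in $\Qnon \cup \{\infty\}$ it is nonnegative, whence $h(\vec{m}_\text{target}) = 0$.

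For the second condition, fix an edge $(\vec{m}, t, \vec{m}')$, i.e.\ $\vec{m} \reach{t} \vec{m}'$. Then $t$ on its own is a firing sequence from $\vec{m}$ to $\vec{m}'$ of weight $\lambda(t)$, so $\dist_\pn(\vec{m}, \vec{m}') \le \lambda(t)$, and hence $d(\vec{m}, \vec{m}') \le \dist_\pn(\vec{m}, \vec{m}') \le \lambda(t)$. Applying the triangle inequality for $d$ with intermediate marking $\vec{m}'$ yields
\[
  h(\vec{m}) = d(\vec{m}, \vec{m}_\text{target})
  \le d(\vec{m}, \vec{m}') + d(\vec{m}', \vec{m}_\text{target})
  \le \lambda(t) + h(\vec{m}') = \mu(\vec{m}, t, \vec{m}') + h(\vec{m}'),
\]
which is precisely consistency along that edge.

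I do not expect a genuine obstacle: the argument is merely an unfolding of the three defining properties of a distance under-approximation. The only spot that warrants a moment's attention is that establishing $d(\vec{m}_\text{target}, \vec{m}_\text{target}) = 0$ uses both the under-approximation inequality and the nonnegativity of the codomain — the inequality on its own gives only $d(\vec{m}_\text{target}, \vec{m}_\text{target}) \le 0$.
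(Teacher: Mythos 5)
Your proof is correct and follows essentially the same route as the paper's: both verify the two consistency conditions by chaining the triangle inequality with the under-approximation property and the fact that $\vec{m} \reach{t} \vec{m}'$ gives $\dist_\pn(\vec{m}, \vec{m}') \le \lambda(t)$. Your explicit remark that nonnegativity of $d$'s codomain is needed to upgrade $d(\vec{m}_\text{target}, \vec{m}_\text{target}) \le 0$ to an equality is a small clarification the paper leaves implicit, but it is not a different argument.
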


\begin{proof}
  Let $\vec{m}, \vec{m}' \in \N^P$ and $t \in T$ be such that $\vec{m}
  \reach{t} \vec{m}'$. We have:
  \begin{align*}
    h(\vec{m})
    &= d(\vec{m}, \vec{m}_\text{target})
    && \text{(by def.\ of $h$)} \\
    &\leq d(\vec{m}, \vec{m}') + d(\vec{m}', \vec{m}_\text{target})
    && \text{(by the triangle inequality)} \\
    &\leq \dist_\pn(\vec{m}, \vec{m}') + d(\vec{m}', \vec{m}_\text{target})
    && \text{(by distance under-approximation)} \\[-4pt]
    &\leq \lambda(t) + d(\vec{m}', \vec{m}_\text{target})
    && \text{(since $\vec{m} \reach{t} \vec{m}'$)} \\
    &= \lambda(t) + h(\vec{m}')
    && \text{(by def.\ of $h$)}.
  \end{align*}
  Moreover, $h(\vec{m}_\text{target}) = d(\vec{m}_\text{target},
  \vec{m}_\text{target}) \leq \dist_\pn(\vec{m}_\text{target},
  \vec{m}_\text{target}) = 0$, where the last equality follows from the fact
  that weights are positive.\qed
\end{proof}

\subsection{From Petri net relaxations to distance under-approximations}\label{ssec:pn:relax}

We now introduce classical relaxations of Petri nets which
over-approximate reachability and consequently give rise to distance
under-approximations. The main source of hardness of the reachability
problem stems from the fact that places are required to hold a
non-negative number of tokens. If we relax this requirement and allow negative
numbers of tokens, we obtain a more tractable relation. More precisely, we write
$\vec{m} \zreach{t} \vec{m}'$ iff $\vec{m}' = \vec{m}
+ \effect{t}$. Note that transitions are always firable under this
semantics. Moreover, they may lead to ``markings'' with negative
components.

Another source of hardness comes from the fact that markings are
discrete. Hence, we can further relax ${\zreach{}}$ into ${\qreach{}}$
where transitions may be scaled down:
\begin{align*}
  \vec{m} \qreach{t} \vec{m}'
  &\iff \vec{m}' = \vec{m} + \delta \cdot \effect{t}
  \text{ for some $0 < \delta \le 1$}.
\end{align*}
One gets a less crude relaxation from considering \emph{nonnegative}
``markings'' only:
\begin{align*}
  \vec{m} \creach{t} \vec{m}'
  &\iff (\vec{m} \geq \delta \cdot \guard{t}) \text{ and }
        (\vec{m}' = \vec{m} + \delta \cdot \effect{t})
  \text{ for some $0 < \delta \le 1$}.
\end{align*}

Under these, we obtain ``markings'' from $\Q^P$ and $\Qnon^P$
respectively. Petri nets equipped with relation $\creach{}$ are known
as \emph{continuous Petri nets}~\cite{DA87,DA10}.

To unify all three relaxations, we sometimes write $\vec{m}
\areach{\G}{\delta t} \vec{m}'$ to emphasize the scaling factor
$\delta$, where $\delta = 1$ whenever $\G = \Z$.
Let $d_\G \colon \N^P \times \N^P \to \Qnon \cup \{\infty\}$ be
defined as $d_\G(\vec{m}, \vec{m}') \defeq \infty$ if $\vec{m}
\not\greach{*} \vec{m}'$, and otherwise:
\[
  d_\G(\vec{m}, \vec{m}') \defeq \min\left\{\sum_{i=1}^n \delta_i
  \cdot \lambda(t_i) :  \vec{m}
    \greach{\delta_1 t_1 \cdots \delta_n t_n} \vec{m}'\right\}.
\]
In words, $d_\G(\vec{m}, \vec{m}')$ is the weight of a shortest path
from $\vec{m}$ to $\vec{m}'$ in the graph induced by the relaxed step
relation $\areach{\G}{}$, where weights are scaled accordingly.

We now show that any $d_\G$, which we call the \emph{$\G$-distance}, 
is a distance under-approximation, and first show
effectiveness of all $d_\G$. It is
well-known and readily seen that reachability over $\G \in \{\Z, \Q\}$
is characterized by the following \emph{state equation}, since
transitions are always firable due to the absence of guards:
\begin{align*}
  \vec{m} \greach{*} \vec{m}'
  &\iff \exists \vec{\sigma} \in \G_{\geq 0}^T :
  \vec{m}' = \vec{m} + \sum_{t \in T} \vec{\sigma}(t) \cdot \effect{t}.
\end{align*}
Here, $\vec{\sigma}$ can be seen as the Parikh image of a sequence
$\sigma$ leading from $\vec{m}$ to $\vec{m}'$.

\begin{restatable}{proposition}{PropDistUnguarded}\label{prop:dist:unguarded}
  The functions $d_\Z$, $d_\Q$, $d_{\Qnon}$ are effective.
\end{restatable}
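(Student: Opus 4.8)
The plan is to reduce the evaluation of each $d_\G$ to a linear, respectively integer‑linear, optimization problem. The key observation is that for $\G \in \{\Z, \Q\}$ the relaxed step relation $\areach{\G}{}$ has no guards, so along a path $\vec{m} \greach{\delta_1 t_1 \cdots \delta_n t_n} \vec{m}'$ the transitions may be reordered freely and the scalings applied to the same transition pooled: setting $\vec{\sigma}(t) \defeq \sum_{i \,:\, t_i = t} \delta_i$, the endpoint constraint becomes the state equation $\vec{m}' = \vec{m} + \sum_{t \in T} \vec{\sigma}(t) \cdot \effect{t}$ and the path weight becomes $\sum_{t \in T} \vec{\sigma}(t) \cdot \lambda(t)$, both depending on $\vec{\sigma} \in \G_{\geq 0}^T$ alone. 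Conversely, every $\vec{\sigma} \in \G_{\geq 0}^T$ solving the state equation is realized by such a path: for $\G = \Z$ it already is a multiset of transitions, and for $\G = \Q$ one fires each $t$ with $\vec{\sigma}(t) > 0$ in $\lceil \vec{\sigma}(t) \rceil$ steps with scalings in $(0, 1]$. Hence
\[
  d_\G(\vec{m}, \vec{m}') = \min\Bigl\{ \textstyle\sum_{t \in T} \vec{\sigma}(t) \cdot \lambda(t) \ :\ \vec{\sigma} \in \G_{\geq 0}^T,\ \vec{m}' = \vec{m} + \textstyle\sum_{t \in T} \vec{\sigma}(t) \cdot \effect{t} \Bigr\},
\]
with the convention $\min \emptyset = \infty$.

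For $\G \in \{\Z, \Q\}$ this is an integer linear program, respectively a rational linear program, with objective bounded below by $0$ since $\lambda > 0$. If the feasible set is empty then $\vec{m} \not\greach{*} \vec{m}'$ and $d_\G = \infty$, which the corresponding feasibility test detects. Otherwise the objective attains a rational minimum — on a rational polyhedron for $\G = \Q$, and on the integer lattice for $\G = \Z$, where only finitely many feasible $\vec{\sigma}$ have objective below any fixed bound — computable by standard (integer) linear programming. So $d_\Z$ and $d_\Q$ are effective.

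The case $\G = \Qnon$ is the main obstacle, because $\creach{}$ carries guards: the order of transitions now matters and the plain state equation no longer characterizes reachability. I would invoke the characterization of continuous‑Petri‑net reachability of Fraca and Haase~\cite{FH15}, by which $\vec{m} \creach{*} \vec{m}'$ holds iff some $\vec{\sigma} \in \Qnon^T$ solves the state equation and its support $\{t : \vec{\sigma}(t) > 0\}$ is fireable from $\vec{m}$ and fireable from $\vec{m}'$ in the reverse net, fireability of a transition set being a purely combinatorial condition on the set and the marking. Moreover, the witnessing run can be chosen to fire each transition $t$ for a total amount of exactly $\vec{\sigma}(t)$, so that its weight equals $\sum_{t} \vec{\sigma}(t) \cdot \lambda(t)$, and conversely any witnessing run induces such a $\vec{\sigma}$. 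Enumerating the finitely many supports $U \subseteq T$, discarding those failing the fireability conditions, and minimizing $\sum_{t \in U} \vec{\sigma}(t) \cdot \lambda(t)$ over $\{\vec{\sigma} : \vec{m}' = \vec{m} + \sum_{t \in U} \vec{\sigma}(t) \cdot \effect{t},\ \vec{\sigma}(t) > 0 \text{ for } t \in U\}$ for the remaining $U$ then yields $d_{\Qnon}(\vec{m}, \vec{m}')$ as the least of these finitely many values, or $\infty$ if no $U$ survives.

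The two delicate points I expect to spend care on in the $\Qnon$ case are: that a minimum‑weight continuous run performs no wasted firing, so its weight is really $\sum_t \vec{\sigma}(t) \cdot \lambda(t)$ (which follows from the constructive direction of the Fraca--Haase argument); and attainment of the infimum under the strict‑positivity constraint, where an infimizing sequence of solutions might converge to a boundary point of strictly smaller support — but any such limit is feasible for some $U' \subsetneq U$ and is thus already covered by the iteration over supports. Alternatively, the whole support enumeration can be avoided by noting that the statement ``there is a $\creach{}$-path from $\vec{m}$ to $\vec{m}'$ of weight at most $c$'' is expressible in the existential fragment of linear arithmetic over $\Q$ (with the transition ordering underlying fireability encoded by auxiliary rank variables), whose satisfiability is decidable and whose associated infimum over $c$ is computable; this gives effectiveness of $d_{\Qnon}$ directly.
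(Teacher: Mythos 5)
Your treatment of $d_\Z$ and $d_\Q$ matches the paper's: both reduce to the state-equation characterization and then to (integer) linear programming in its optimization variant; your explicit justification of the pooling of scalings into a Parikh vector is detail the paper leaves implicit, and it is correct. For $d_{\Qnon}$ you take a genuinely different primary route. The paper simply observes that the reachability relation of a continuous Petri net is expressible by a linear-size formula of existential linear real arithmetic whose free variables include the Parikh image~\cite{BFHH17}, adds the weight as a linear term, and concludes effectiveness from decidability of (optimization over) that theory --- which is also the route your closing sentence sketches. Your main argument instead unwinds the Fraca--Haase characterization~\cite{FH15} directly: enumerate the finitely many supports $U \subseteq T$, keep those fireable from $\vec{m}$ and backward-fireable from $\vec{m}'$, and minimize the weight over the state-equation polyhedron restricted to each surviving support. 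This works: the witnessing run can indeed be chosen with Parikh image exactly $\vec{\sigma}$, so no weight is wasted, and your handling of attainment is sound because fireability is inherited by subsets of a fireable set, so a boundary limit with strictly smaller support is feasible for a support that is itself enumerated. What your approach buys is a self-contained, concrete algorithm (exponentially many LPs) that avoids any appeal to a logical theory; what the paper's approach buys is a single polynomial-size optimization-modulo-theories query, which is what the implementation actually solves. Both establish effectiveness.
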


\begin{proof}
  By the state equation, we have:
  \begin{align*}
    d_\G(\vec{m}, \vec{m}')
    &= \min\left\{\sum_{t \in T} \lambda(t) \cdot \vec{\sigma}(t) :
    \vec{\sigma} \in \G_{\geq 0}^T, \vec{m}' = \vec{m} + \sum_{t \in T}
  \vec{\sigma}(t) \cdot \effect{t}\right\}.
  \end{align*}
  Therefore, $d_\Q(\vec{m}, \vec{m}')$ (resp.\ $d_\Z(\vec{m},
  \vec{m}')$) are computable by (resp.\ integer) linear programming,
  which is is complete for \P\ (resp.\ \NP), in its variant where one
  must check whether the minimal solution is at most some bound.

  For $d_{\Qnon}$, note that the reachability relation of a continuous
  Petri net can be expressed in the existential fragment of linear
  real arithmetic~\cite{BFHH17}. Hence, effectiveness follows from the
  decidability of linear real arithmetic.\qed
\end{proof}

Altogether, we conclude that $d_\G$ is a distance
under-approximation. Furthermore, we can show that $d_\G$ yields
\emph{unbounded} heuristics, which, by~\Cref{thm:greedy:termination},
ensure termination of GBFS on reachable instances:

\begin{restatable}{theorem}{ThmHUnbounded}\label{prop:h:unbounded}\label{prop:relax:dom}
  Let $\G \in \{\Z, \Q, \Qnon\}$, then $d_\G$ is a distance
  under-approxi\-ma\-tion. Moreover, the heuristics arising from it are
  unbounded.
\end{restatable}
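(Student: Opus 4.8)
The plan is to dispose of the three conditions in the definition of a distance under-approximation and then to concentrate on unboundedness, which is the only part that is not essentially bookkeeping. Effectiveness of $d_\Z$, $d_\Q$ and $d_{\Qnon}$ is exactly \Cref{prop:dist:unguarded}; that argument also shows that the minimum defining $d_\G$ is attained and rational (for $\G=\Z$, rescaling all weights by a common denominator of the $\lambda(t)$ turns the set of achievable weights into a subset of $\N$, hence a well-ordered set; for $\G=\Q$ it is the optimum of a feasible rational linear program whose objective is bounded below; for $\G=\Qnon$ it is attained via the polyhedral description of continuous reachability, cf.~\cite{FH15}). For the inequality $d_\G(\vec m,\vec m')\le\dist_\pn(\vec m,\vec m')$ I would observe that each step $\vec m\reach{t}\vec m'$ of $G_\N(\pn)$ is also a relaxed step with scaling factor $1$ in all three relaxations (one drops the guard for $\G=\Z$, drops discreteness for $\G=\Q$, and keeps the guard $\vec m\ge\guard{t}$ for $\G=\Qnon$); hence every finite path of $G_\N(\pn)$ from $\vec m$ to $\vec m'$ is a relaxed path of the same weight, so the minimum defining $d_\G(\vec m,\vec m')$ ranges over a superset of the runs defining $\dist_\pn(\vec m,\vec m')$. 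The triangle inequality then follows by concatenating a shortest relaxed path from $\vec m$ to $\vec m'$ with one from $\vec m'$ to $\vec m''$: the concatenation is again a relaxed path from $\vec m$ to $\vec m''$ (for $\G=\Qnon$ the guard conditions are local, and the intermediate markings of the concatenation are exactly the two sequences glued at $\vec m'$), and its weight is the sum of the two weights.

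For unboundedness it suffices to treat $\G=\Q$. Indeed, every $\Z$-path and every $\Qnon$-path is in particular a $\Q$-path of the same weight (take scaling factor $1$ in the first case, forget the guard in the second), so $d_\Q(\vec m,\vec m')\le d_\G(\vec m,\vec m')$ for all $\vec m,\vec m'$ and every $\G\in\{\Z,\Qnon\}$; consequently any sequence that is bounded with respect to the heuristic arising from $d_\G$ is also bounded with respect to the heuristic arising from $d_\Q$, so unboundedness of the latter implies unboundedness of the former.

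Now fix a target marking $\vec m_\text{target}$ and put $h(\vec m)\defeq d_\Q(\vec m,\vec m_\text{target})$. I claim that for every $b\in\Qnon$ the set $\{\vec m\in\N^P:h(\vec m)\le b\}$ is finite, which immediately precludes a bounded sequence and hence shows $h$ is \emph{unbounded}. Write $\lambda_{\min}\defeq\min_{t\in T}\lambda(t)>0$ and $B\defeq b/\lambda_{\min}$. If $h(\vec m)\le b$, then by the state equation there is $\vec{\sigma}\in\Qnon^T$ with $\vec m_\text{target}=\vec m+\sum_{t\in T}\vec{\sigma}(t)\cdot\effect{t}$ and $\sum_{t\in T}\lambda(t)\cdot\vec{\sigma}(t)=h(\vec m)\le b$; since each $\lambda(t)\ge\lambda_{\min}$, this forces $\sum_{t\in T}\vec{\sigma}(t)\le B$. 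Hence $\vec m=\vec m_\text{target}-\sum_{t\in T}\vec{\sigma}(t)\cdot\effect{t}$ lies in the image of the bounded set $\{x\in\Qnon^T:\sum_{t\in T}x(t)\le B\}$ under the affine map $x\mapsto\vec m_\text{target}-\sum_{t\in T}x(t)\cdot\effect{t}$, which is a bounded subset of $\Q^P$. A bounded subset of $\Q^P$ contains only finitely many elements of $\N^P$, so $\{\vec m\in\N^P:h(\vec m)\le b\}$ is finite, as claimed. Together with \Cref{thm:greedy:termination}, this also yields termination of GBFS on reachable instances.

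The steps that actually require care are the reduction to $\G=\Q$ and the appeal to the state equation — one must check that every relaxed path genuinely re-expresses as a $\Q$-path of the same weight and that the state-equation minimum for $d_\Q$ is attained — after which the finiteness argument is elementary (a bounded rational box contains only finitely many integer points). The one mildly technical bookkeeping point is the well-definedness and rationality of the minimum in the definition of $d_{\Qnon}$, which I would dispatch via the polyhedral characterization of continuous Petri net reachability.
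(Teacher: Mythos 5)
Your proof is correct. The first part (effectiveness, the inequality $d_\G \le \dist_\pn$ via relaxed steps with scaling factor $1$, and the triangle inequality by concatenation) matches the paper's argument, as does the reduction of unboundedness to the single case $\G = \Q$ via $h_\Q \le h_\G$. For the heart of the matter — unboundedness of $h_\Q$ — you take a genuinely different and, in my view, cleaner route. The paper argues by contradiction: it assumes infinitely many pairwise distinct markings with $h_\Q \le b$, invokes well-quasi-ordering of $\N^P$ to extract a subsequence strictly increasing in some coordinate $p$, and then shows that some transition count $\vec{x}_{i_j}(s)$ with $\effect{s}(p)<0$ must blow up, forcing the objective above $b$. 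You instead prove the stronger statement that every sublevel set $\{\vec m \in \N^P : h_\Q(\vec m) \le b\}$ is finite: the bound $b$ on the weighted objective caps $\sum_t \vec\sigma(t)$ at $b/\lambda_{\min}$, so $\vec m$ lies in a bounded box around $\vec m_\text{target}$, which contains finitely many integer points. Both arguments rest on the same two inequalities (positive minimal weight bounds the total transition mass, and the transition mass bounds the displacement), but your contrapositive packaging avoids the wqo machinery and the subsequence extraction entirely, and the finiteness of sublevel sets is a reusable fact in its own right. The only points needing the care you already flag are that the minimum defining $d_\Q$ is attained (or, if you prefer, replace it by an $\varepsilon$-approximate solution, which costs nothing in the finiteness argument) and that the state-equation characterization is what licenses passing from $h_\Q(\vec m) \le b$ to a feasible $\vec\sigma$; both are covered by \Cref{prop:dist:unguarded}.
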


\begin{proof}
  Let $\pn = (P, T, f, \lambda)$ be a weighted Petri net.
  Effectiveness of $d_{\G}$ follows from
  \Cref{prop:dist:unguarded}. By definitions and a simple induction,
  ${\reach{\sigma}} \subseteq {\areach{\G}{\sigma}}$ for any sequence
  $\sigma \in T^*$, with weights left unchanged for unscaled
  transitions. This implies that $d_{\G}(\vec{m}, \vec{m}') \leq
  \dist_\pn(\vec{m}, \vec{m}')$ for every $\vec{m}, \vec{m}' \in
  \G^P$. Moreover, the triangle inequality holds since for every
  $\vec{m}, \vec{m}', \vec{m}'' \in \G^P$ and sequences $\sigma,
  \sigma'$:
  \[\vec{m} \areach{\G}{\sigma} \vec{m}'
  \areach{\G}{\sigma'} \vec{m}'' \text{ implies } \vec{m}
  \areach{\G}{\sigma\sigma'} \vec{m}''.\]

  Let us sketch the proof of the second part. Let
  $\vec{m}_\text{target}$ be a marking and let $h_\G$ be the heuristic
  obtained from $d_\G$ for $\vec{m}_\text{target}$. Since
  $h_{\Q}(\vec{m}) \leq h_{\G}(\vec{m})$ for all $\vec{m}$ and
  $\G \in \{\Z, \Qnon\}$, it suffices to prove that $d_\Q$ is
  unbounded. Suppose it is not. There exist $b \in \Qnon$ and pairwise
  distinct markings $\vec{m}_0, \vec{m}_1, \ldots$ each with
  $h_{\Q}(\vec{m}_i) \leq b$. Let $\vec{x}_i$ be a solution to the
  state equation that gives $h_{\Q}(\vec{m}_i)$. By
  well-quasi-ordering and pairwise distinctness, there is a
  subsequence such that $\vec{m}_{i_0}(p) < \vec{m}_{i_1}(p) < \cdots$
  for some $p \in P$. Thus,
  $\lim_{j \to \infty} \vec{m}_\text{target}(p) - \vec{m}_{i_j}(p) =
  -\infty$, and hence $\lim_{j \to \infty} \vec{x}_{i_j}(s) = \infty$
  for some $s \in T$ with $\effect{s}(p) < 0$. This means that $b \geq
  h_{\Q}(\vec{m}_{i_j}) = \sum_{t \in
  T} \lambda(t) \cdot \vec{x}_{i_j}(t) > b$ for a sufficiently large
  $j$.\qed
\end{proof}

\subsection{Directed reachability based on distance under-approximations}%
\label{ssec:directed:safety}

We have all the ingredients to use Algorithm~\ref{alg:directed-search} for
answering reachability queries.

A \emph{distance under-approximation scheme} is a mapping
$\mathcal{D}$ that associates a distance under-approximation
$\mathcal{D}(\pn)$ to each weighted Petri net $\pn$. Let
$h_{\mathcal{D}(\pn), \vec{m}_\text{target}}$ be the heuristic
obtained from $\mathcal{D}(\pn)$ for marking
$\vec{m}_\text{target}$. By instantiating
Algorithm~\ref{alg:directed-search} with this heuristic, we can search
for a short(est) firing sequence witnessing that
$\vec{m}_\text{target}$ is reachable. Of course, constructing the
reachability graph of $\pn$ would be at least as difficult as
answering this query, or impossible if it is infinite. Hence, we
provide $G_\N(\pn)$ \emph{symbolically} through $\pn$ and let
Algorithm~\ref{alg:directed-search} explore it on-the-fly by
progressively firing its transitions.

For each $\G \in \{\Z, \Q, \Qnon\}$, the function $\mathcal{D}_\G$
mapping a weighted Petri net $\pn$ to its $\G$-distance $d_\G$ is a
distance under-approximation scheme with consistent and unbounded
heuristics by \Cref{prop:h:consistent}, \Cref{thm:greedy:termination} and
\Cref{prop:h:unbounded}. Although Algorithm~\ref{alg:directed-search} is geared towards finding
paths, it can prove \emph{non}-reacha\-bi\-li\-ty even for infinite
reachability graphs. Indeed, at some point, every candidate marking
$\vec{m} \in C$ may be such that
$h_{\mathcal{D}(\pn), \vec{m}_\text{target}}(\vec{m}) = \infty$, which
halts with $\infty$. There is no guarantee that this happens, but, as
reported
\eg\ by~\cite{ELMMN14,BFHH17}, the $\G$-distance for domains $\G \in
\{\Z, \Q, \Qnon\}$ does well for witnessing non-reachability in
practice, often from the very first marking
$\vec{m}_\text{init}$.

\paragraph*{An example.}

We illustrate our approach with a toy example and $\mathcal{D}_\Q$
(the scheme based on the state equation over $\Qnon^T$). Consider the
Petri net $\pn$ illustrated on the left of \Cref{fig:search:ex}, but
marked with $\vec{m}_\text{init} \defeq [p_1 \colon 0, p_2 \colon
0]$. Suppose we wish to determine whether $\vec{m}_\text{init}$ can
reach marking $\vec{m}_\text{target} \defeq [p_1 \colon 0, p_2 \colon
1]$ in $\pn$.

We consider the case where Algorithm~\ref{alg:directed-search} follows a greedy
best-first search, but the markings would be expanded in the same way
with \astar. Let us abbreviate a marking $[p_1 \colon x, p_2 \colon
  y]$ as $(x, y)$. Since $\effect{t_2} = (0, 1)$, the heuristic
considers that $\vec{m}_\text{init}$ can reach $\vec{m}_\text{target}$
in a single step using transition $t_2$ (it is unaware of the
guard). Marking $(1, 0)$ is expanded and its heuristic value increases
to $2$ as the state equation considers that both $t_2$ and $t_3$
must be fired (in some unknown order). Markings $(2, 0)$ and $(1, 1)$
are both discovered with respective heuristic values $3$ and $1$. The
latter is more promising, so it is expanded and target $(0, 1)$ is
discovered. Since its heuristic value is $0$, it is immediately
expanded and the correct distance $\dist_\pn(\vec{m}_\text{init},
\vec{m}_\text{target}) = 3$ is returned. Note that, in this example,
the only markings expanded are precisely those occurring on the
shortest path.

\paragraph*{Handling multiple targets.}

Algorithm~\ref{alg:directed-search} can be adapted to search for
\emph{some} marking from a given target set $X \subseteq \N^P$. The
idea consists simply in using a heuristic $h_X \colon \N^P \to \Qnon
\cup \{\infty\}$ estimating the weight of a shortest path to
\emph{any} target: \[h_X(\vec{m}) \defeq
\min\{h_{\mathcal{D}(\pn), \vec{m}_\text{target}}(\vec{m}) :
\vec{m}_\text{target} \in X\}.\]
This is convenient for partial reachability instances occurring in
practice, \ie\
\[
X \defeq \left\{\vec{m}_\text{target} \in \N^P \colon
\vec{m}_\text{target}(p) \sim_p \vec{c}(p)\right\} \text{ where }
\vec{c} \in \N^P \text{ and } each \sim_p \in \{=, \geq\}.
\]

\section{Experimental results}
\label{sec:experimental}
We implemented Algorithm~\ref{alg:directed-search} in a prototype,
called \textsc{FastForward}, which supports all selection strategies
and distance under-approxi\-ma\-tions presented in the paper. We
evaluate \textsc{FastForward} empirically with three main goals in
mind. First, we show that our approach is competitive with established
tools and can even vastly outperform them, and we also give insights
on its performance w.r.t.\ its parameterizations. Second, we compare
the length of the witnesses reported by the different tools. Third, we
briefly discuss the quality of the heuristics.

\subsubsection*{Technical details.}

Our tool is written in \textsc{C\#} and uses
\textsc{Gurobi}~\cite{Gur20}, a state-of-the-art MILP solver, for
distance under-approxi\-ma\-tions. We performed our benchmarks on a
machine with an 8-Core Intel\textregistered~Core\texttrademark~i7-7700
CPU @ 3.60GHz running on Ubuntu 18.04 with memory constrained to
$\sim$8GB\@. We used a timeout of 60 seconds per instance, and all
tools were invoked from a \textsc{Python} script using the \code{time}
module for time measurements.

A minor challenge arises from the fact that many instances specify an
upward-closed set of initial markings rather than a single one. For
example, $\vec{m}_\text{init}(p) \geq 1$ to specify, \eg, an arbitrary
number of threads. We handle this by setting $\vec{m}_\text{init}(p) =
1$ and adding a transition $t_p$ producing a token into $p$.

As a preprocessing step, we implemented \emph{sign
analysis}~\cite{GLS18}. It is a general pruning technique that has
been shown beneficial for reducing the size of the state-space of
Petri nets. Initially, places that carry tokens are viewed as
marked. For each transition whose input places are marked, the output
places also become marked. When a fixpoint is reached, places left
unmarked cannot carry tokens in any reachable marking, so they are
discarded.

\label{ssec:exp:implem}

\subsubsection*{Benchmarks.}

Due to the lack of tools handling reachability for \emph{unbounded}
state spaces, benchmarks arising in the literature are primarily
\emph{coverability} instances\footnote{The Model Checking Contest
  focuses on reachability for \emph{finite} state spaces.},
\ie\ reachability towards an upward closed set of target markings. We
gathered 61 positive and 115 negative coverability instances
originating from five suites~\cite{KKW14,P02,BMDS07,J09,DKO13}
previously used for benchmarking~\cite{ELMMN14,BFHH17,GLS18}. They
arise from the analysis of multi-threaded \textsc{C} programs with
shared-memory; mutual exclusion algorithms; communication protocols;
provenance analysis in the context of a medical messaging and a
bug-tracking system; and the verification of \textsc{Erlang}
concurrent programs. We further extracted the \emph{sypet} suite made
of 30 positive (standard) reachability instances arising from queries
encountered in type-directed program synthesis~\cite{FMWDR17}. The
overall goal of this work is to enable a vast range of untapped
applications requiring reachability over unbounded state-spaces,
rather than just coverability. To obtain further (positive) instances
of the Petri net reachability problem, we performed random walks on
the Petri nets from the aforementioned coverability benchmarks. To
this end, we used the largest quarter of distinct Petri nets from each
coverability suite, for a total of 33. We performed one random walk
each of lengths 20, 25, 30, 35, 40, 50, 60, 75, 90 and 100, and we
saved the resulting marking as the target. For nets with an
upward-closed initial marking, we randomly chose to start with a
number of tokens between 1 and 20\% of the length of the walk. It is
important to note that even with long random walks, instances can (and
in fact tend to) have short witnesses. To remove trivial instances and
only keep the most challenging ones, we removed those instances where
\textsc{FastForward} or \textsc{LoLA} reported a witness of length at
most 20, disregarding the transitions used to generate the initial
marking. This leaves us with 127 challenging instances on which the
shortest witness is either unknown or has length more than
20. Moreover, this yields real-world Petri nets with no bias towards
any specific kind of targets.

This table summarizes the characteristics of the various benchmarks:

\begin{center}
  \setlength{\tabcolsep}{4pt}
  \small
  \begin{tabular}{crrrrrrrrr}
    \toprule
    \multirow{2}{*}{\textbf{Suite}} &
    \multirow{2}{*}{\textbf{Size}} &
    \multicolumn{4}{c}{\textbf{Number of places}} &
    \multicolumn{4}{c}{\textbf{Number of transitions}} \\

    &&
    min.\ & med.\ & mean & max.\ &
    min.\ & med.\ & mean & max.\ \\

    \midrule
    \textsc{coverability} &
    61 &
    16 &  87 &  226 &  2826 &
    14 & 181 & 1519 & 27370 \\

    \midrule
    \textsc{sypet} &
     30 &
     65 &  251 &  320 & 1199 &
    537 & 2307 & 2646 & 8340 \\

    \midrule
    \textsc{random walks} &
    127 &
     52 &  306 & 531 & 2826 &    
     60 & 3137 & 5885 & 27370 \\

    \bottomrule
  \end{tabular}
\end{center}

\subsubsection*{Tool comparison.}

To evaluate our approach on reachability instances, we compare
\textsc{FastForward} to \textsc{LoLA}~\cite{K02}, a tool developed for
two decades that wins several categories of the Model Checking Contest
every year. \textsc{LoLA} is geared towards model checking of finite
state spaces, but it implements semi-decision procedures for the
unbounded case. We further compare the three selection strategies of
Algorithm~\ref{alg:directed-search}: \astar, GBFS and Dijkstra; the
two first with the distance under-approximation scheme
$\mathcal{D}_{\Q}$, which provides the best trade-off between estimate
quality and efficiency. We also considered comparing with
\textsc{KReach}~\cite{DL20}, a tool showcased at TACAS'20 that
implements an exact non-elementary algorithm. However, it timed out on
 all instances, even with larger time limits.

\begin{figure}[h]
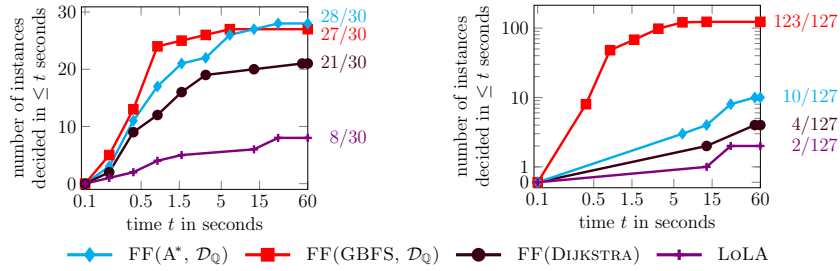

  \centering  
  \begin{minipage}[t]{0.45\textwidth}%
      \vspace*{0pt}%
      \input{benchmark-figs/fig-sypet-not-prepruned.tex}%
    \end{minipage}%
  \hspace*{15pt}%
  \begin{minipage}[t]{0.45\textwidth}%
      \vspace*{0pt}%
      \input{benchmark-figs/fig-reachability-not-prepruned.tex}%
  \end{minipage}%
  
  \begin{tikzpicture}[very thick, scale=\plotscale,
      every node/.style={scale=\plotscale}]
    \begin{customlegend}[legend columns=4,
        legend style={draw=none, column sep=1ex},
        legend entries={
          \textsc{FF(\astar, $\mathcal{D}_{\Q}$)},
          \textsc{FF(GBFS, $\mathcal{D}_{\Q}$)},
          \textsc{FF(Dijkstra)},
          \textsc{LoLA}},]
      \addlegend{FastForward QMarkingEQGurobi+Competitive}{diamond*}
      \addlegend{GBFS}{square*}
      \addlegend{FastForward zero+Competitive}{*}
      \addlegend{Lola}{+}
    \end{customlegend}
  \end{tikzpicture}
  \caption{Cumulative number of reachability instances decided over
    time. \emph{Left}: \textsc{sypet} suite (semi-log
    scale). \emph{Right}: \textsc{random-walk} suite (log
    scale).}\label{fig:reach:all}
\end{figure}

\Cref{fig:reach:all} depicts the number of reachability instances
decided by the tools within the time limit. As shown, all
approaches outperform \textsc{LoLA}, with GBFS as the clear winner on
the \textsc{random-walk} suite and \astar\ slightly better on the
\textsc{sypet} suite. Note that Dijkstra's selection strategy
sometimes competes due to its locally very cheap computational cost (no
heuristic evaluation), but its performance generally decreases as the
distance increases.

To demonstrate the versatility of our approach, we also benchmarked
\textsc{FastForward} on the original coverability instances. Recall
that coverability is an \EXPSPACE-complete problem that reduces to
reachability in linear time~\cite{Lip76,Rac78}. While its complexity
exceeds the \PSPACE-complete\-ness of reachability for finite
state-spaces~\cite{JLL77,Esp98}, it is much more tame than the
non-elementary complexity of (unbounded) reachability. We compare
\textsc{FastForward} to four tools implementing algorithms tailored
specifically to the coverability problem: \textsc{LoLA},
\textsc{Bfc}~\cite{KKW14}, \textsc{ICover}~\cite{GLS18} and the
backward algorithm (based on~\cite{ACJT96}) of
\textsc{mist}~\cite{P02}. We did not test
\textsc{Petrinizer}~\cite{ELMMN14} since it only handles negative
instances, while we focus on positive ones; likewise for
\textsc{QCover}~\cite{BFHH17} since it is superseded by
\textsc{ICover}.

\begin{figure}[h]
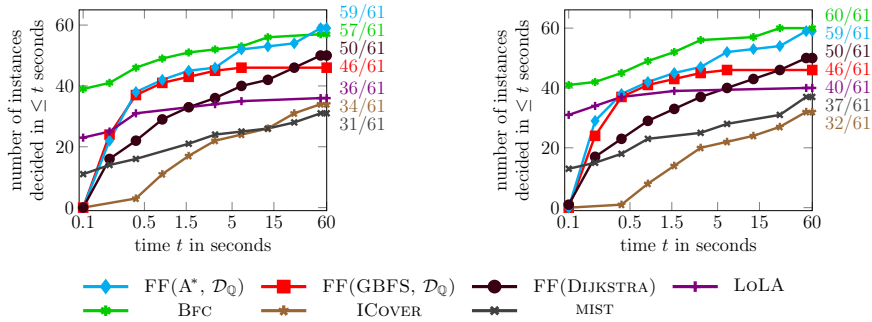

  \begin{minipage}[t]{0.48\textwidth}
    \input{benchmark-figs/fig-coverability-unsafe.tex}
  \end{minipage}%
  \hspace*{15pt}
  \begin{minipage}[t]{0.48\textwidth}
    \input{benchmark-figs/fig-coverability-prepruned.tex}
  \end{minipage}
  \vspace*{-15pt}
      \begin{center}
      \begin{tikzpicture}[very thick, scale=\plotscale,
        every node/.style={scale=\plotscale}]
\begin{customlegend}[legend columns=4,
legend style={draw=none, column sep=1ex},
legend entries={%
\textsc{FF(\astar, $\mathcal{D}_{\Q}$)},
\textsc{FF(GBFS, $\mathcal{D}_{\Q}$)},
\textsc{FF(Dijkstra)},
\textsc{LoLA},
\textsc{Bfc},
\textsc{ICover},
\textsc{mist}}]
\addlegend{FastForward QMarkingEQGurobi+Pruning+Competitive}{diamond*}
\addlegend{GBFS}{square*}
\addlegend{FastForward zero+Pruning+Competitive}{*}
\addlegend{Lola}{+}
\addlegend{BFC}{asterisk}
\addlegend{ICover}{star}
\addlegend{MIST}{x}
\end{customlegend}
\end{tikzpicture}
\end{center}
      \caption{Cumulative number of (positive) coverability instances
        decided over time. \emph{Left}: Evaluation on the original
        instances. \emph{Right}: Evaluation on the pre-pruned
        instances.}\label{fig:coverability}
\end{figure}

\Cref{fig:coverability} illustrates the number of coverability
instances decided within the time limit. The left side corresponds
to an evaluation on the original instances where \textsc{FastForward}
performs pruning (included in its runtime).
On the right hand right side the pruned instances are the
input for all tools, and the time for this pruning is not included for any
tool. As a caveat,
\textsc{ICover} performs its own preprocessing which includes pruning
among techniques specific to coverability. This preprocessing is
enabled (and its time is included) even when pruning is already done.
Using \textsc{FastForward(\astar, $\mathcal{D}_{\Q}$)}, we decide more
instances than all tools on unpruned Petri nets, and one less than
\textsc{Bfc} for pre-pruned instances. It is worth mentioning that
with a time limit of 10 minutes per instance, \textsc{FastForward(\astar,
  $\mathcal{D}_{\Q}$)} is the only tool to decide all 61 instances.

\begin{figure}[h]
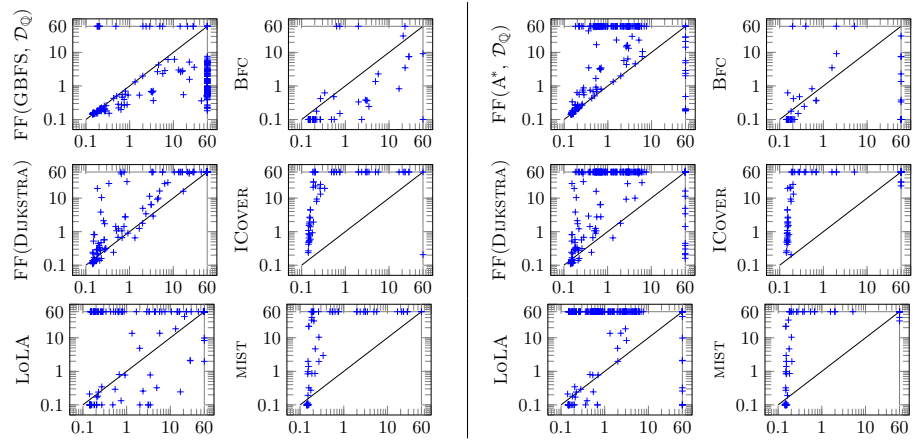

  \centering%
  \hspace*{-10pt}%
  \begin{minipage}[t]{0.29\textwidth}%
    \centering
    \input{benchmark-figs/fig-comparison-gbfs.tex}%
  \end{minipage}%
  \hspace*{-20pt}
  \begin{minipage}[t]{0.29\textwidth}%
    \centering
    \input{benchmark-figs/fig-comparison-bfc.tex}%
  \end{minipage}%
  \vrule
  \hspace*{-5pt}
  \begin{minipage}[t]{0.29\textwidth}%
    \centering
    \input{benchmark-figs/fig-gbfs-comparison-gbfs.tex}%
  \end{minipage}%
  \hspace*{-20pt}
  \begin{minipage}[t]{0.29\textwidth}%
    \centering
    \input{benchmark-figs/fig-gbfs-comparison-bfc.tex}%
  \end{minipage}%
  \vspace*{-1pt}

  \hspace*{-10pt}%
  \begin{minipage}[t]{0.29\textwidth}%
    \centering
    \input{benchmark-figs/fig-comparison-dijkstra.tex}%
  \end{minipage}%
  \hspace*{-20pt}
  \begin{minipage}[t]{0.29\textwidth}%
    \centering
    \input{benchmark-figs/fig-comparison-icover.tex}%
  \end{minipage}%
  \vrule
  \hspace*{-5pt}
  \begin{minipage}[t]{0.29\textwidth}%
    \centering
    \input{benchmark-figs/fig-gbfs-comparison-dijkstra.tex}%
  \end{minipage}%
  \hspace*{-20pt}
  \begin{minipage}[t]{0.29\textwidth}%
    \centering
    \input{benchmark-figs/fig-gbfs-comparison-icover.tex}%
  \end{minipage}%
  \vspace*{-1pt}

  \hspace*{-10pt}%
  \begin{minipage}[t]{0.29\textwidth}%
    \centering
    \input{benchmark-figs/fig-comparison-lola.tex}%
  \end{minipage}%
  \hspace*{-20pt}
  \begin{minipage}[t]{0.29\textwidth}%
    \centering
    \input{benchmark-figs/fig-comparison-mist.tex}%
  \end{minipage}%
  \vrule
  \hspace*{-5pt}
  \begin{minipage}[t]{0.29\textwidth}%
    \centering
    \input{benchmark-figs/fig-gbfs-comparison-lola.tex}%
  \end{minipage}%
  \hspace*{-20pt}
  \begin{minipage}[t]{0.29\textwidth}%
    \centering
    \input{benchmark-figs/fig-gbfs-comparison-mist.tex}%
  \end{minipage}%
  \vspace*{-1pt}
  
\caption{Runtime comparison against \textsc{FF(\astar,
    $\mathcal{D}_{\Q}$)} (\emph{left}) and \textsc{FF(GBFS,
    $\mathcal{D}_{\Q}$)} (\emph{right}), in seconds, for individual
  instances without pre-pruning. Tools on the first column of each
  side include coverability and reachability instances, while those on
  the second column of each side include coverability only. Marks on
  the \textcolor{gray!80!black}{gray} lines denote timeouts
  ($60$~s).}\label{fig:comparisons}
\end{figure}

We also compared the running time of \astar\ and GBFS with
$\mathcal{D}_{\Q}$ to the other tools and approaches. For each tool,
we considered the type of instances it can handle: either reachability
and coverability, or coverability only. \Cref{fig:comparisons} depicts
this comparison, where the base approach is faster for data points
that lie in the upper-left half of the graph. The axes start at 0.1
second to avoid a comparison based on technical aspects such as the
programming language. Yet, \textsc{LoLA}, \textsc{Bfc} and
\textsc{mist} regularly solve instances faster than this, which speaks
to their level of optimization. We can see that \textsc{FastForward}
outperforms \textsc{ICover}, \textsc{LoLA} and \textsc{mist}
overall. We cannot compete with \textsc{Bfc} in execution time as it
is a highly optimized tool specifically tailored to only the
coverability problem that can employ optimization techniques such as
Karp-Miller trees that do not work for reachability queries.

\subsubsection*{Length of the witnesses.}

Since our approach is also geared towards the identification of
short(est) reachability witnesses, we compared the different tools
with respect to length of the reported one, depicted in
\Cref{fig:path-comparison}. Positive values on the $y$-axis mean the
witness was not minimal, while $y = 0$ means it was. Note that the
points for \textsc{Bfc} must be taken with a grain of salt: it uses a
different file format, and its translation utility can introduce
additional transitions. This means that even if \textsc{Bfc} found a
shortest witness, it could be longer than a shortest one of the
original instance.

\begin{figure}

  \begin{center}
  {%
  \begin{tikzpicture}[scale=\plotscale, every node/.style={scale=\plotscale}, font=\large]%
      \begin{axis}[%
          xmin=-0.5,
          xmax=50.5,
          ymin=-0.5,
          height=5cm,
          log basis y={2},
          ytick={0,0.5,1,2,3,4,5,6,7},
          yticklabels={0,1,2,4,8,16,32,64},
          width=\textwidth,
          xlabel style={align=center},
          xlabel={length of the shortest witness},
          ylabel style={align=center},
          ylabel={difference with \\ exact distance},
          legend style={
            at={(0, 0.65)},
            anchor=west},
          reverse legend,
        ]%

        \addplot[only marks, color={colFastForward GBFS QMarkingEQGurobi+Pruning+Competitive}, mark=square*] coordinates {(13, 0) 
        (13, 0) 
        (14, 1.0) 
        (13, 0) 
        (4, 0) 
        (8, 0) 
        (36, 1.0) 
        (29, 0) 
        (17, 2.0) 
        (9, 0) 
        (12, 0) 
        (7, 0) 
        (30, 0) 
        (16, 0) 
        (17, 2.0) 
        (9, 1.0) 
        (14, 0) 
        (14, 0) 
        (16, 0) 
        (13, 0) 
        (18, 1.0) 
        (19, 0) 
        (22, 0) 
        (19, 0) 
        (10, 0) 
        (32, 0) 
        (10, 0) 
        (21, 0) 
        (26, 0) 
        (10, 0) 
        (12, 0) 
        (15, 1.0) 
        (9, 1.0) 
        (8, 0) 
        (19, 0) 
        (12, 0) 
        (10, 0) 
        (8, 0) 
        (12, 0) 
        (13, 0) 
        (13, 0) 
        (10, 0) 
        (12, 0) 
        (16, 1.0) 
        (4, 0) 
        (1, 0) 
        (2, 0) 
        (2, 0.5) 
        (2, 0) 
        (2, 0) 
        (3, 0) 
        (4, 0) 
        (4, 0.5) 
        (2, 0) 
        (3, 0) 
        (2, 0.5) 
        (2, 0) 
        (4, 0) 
        (3, 0) 
        (3, 0.5) 
        (1, 0) 
        (4, 0) 
        (3, 0.5) 
        (2, 0) 
        (1, 0) 
        (5, 1.0) 
        (2, 0.5) 
        (2, 0) 
        (3, 1.0) 
        (1, 0) 
        (3, 0.5) 
        (30, 3.3219280948873626) 
        (25, 0) 
        (50, 4.321928094887363) 
        (50, 0) 
        (25, 0) 
        (50, 0) 
        (50, 0) 
        (25, 3.3219280948873626) 
        (25, 0) 
        (40, 3.3219280948873626) 
        (25, 0) 
        (30, 3.3219280948873626) 
        };
                
        \addplot[only marks, color={colBFC}, mark=asterisk, very thick] coordinates {(13, 0.5) 
        (14, 3.700439718141092) 
        (13, 0) 
        (4, 0) 
        (8, 0) 
        (36, 0) 
        (17, 3.700439718141092) 
        (9, 0) 
        (12, 0.5) 
        (7, 3.9068905956085187) 
        (30, 1.0) 
        (16, 0) 
        (17, 3.700439718141092) 
        (9, 1.0) 
        (14, 2.0) 
        (14, 0) 
        (16, 1.0) 
        (13, 0.5) 
        (18, 3.9068905956085187) 
        (19, 1.0) 
        (22, 0) 
        (19, 1.0) 
        (10, 4.08746284125034) 
        (32, 2.0) 
        (10, 0) 
        (21, 0) 
        (26, 0) 
        (10, 0) 
        (12, 0) 
        (15, 2.584962500721156) 
        (9, 1.0) 
        (8, 1.0) 
        (19, 3.0) 
        (12, 3.1699250014423126) 
        (10, 3.700439718141092) 
        (8, 3.700439718141092) 
        (13, 0.5) 
        (10, 4.700439718141093) 
        (12, 0.5) 
        (16, 2.584962500721156) 
        (22, 1.5849625007211563) 
        (24, 1.0) 
        (24, 2.584962500721156) 
        (24, 0.5) 
        (24, 2.321928094887362) 
        (35, 2.584962500721156) 
        (11, 4.700439718141093) 
        (18, 2.321928094887362) 
        (16, 3.9068905956085187) 
        (18, 3.5849625007211565) 
        (19, 2.584962500721156) 
        (11, 3.8073549220576037) 
        (15, 3.8073549220576037) 
        (14, 4.392317422778761) 
        (17, 2.0) 
        };

        \addplot[mark=none] coordinates {(0,0)(50.5,0)};
        \addplot[only marks, color={colLoLA}, mark=+, thick] coordinates {(13, 0) 
        (13, 0) 
        (13, 0) 
        (4, 0) 
        (8, 0) 
        (17, 0) 
        (9, 0) 
        (12, 0) 
        (16, 0) 
        (17, 0) 
        (14, 0) 
        (14, 0) 
        (13, 0) 
        (18, 0) 
        (19, 0) 
        (10, 0) 
        (10, 0) 
        (26, 1.0) 
        (12, 0) 
        (15, 2.321928094887362) 
        (8, 0) 
        (10, 0) 
        (8, 1.5849625007211563) 
        (12, 0) 
        (13, 0) 
        (12, 0) 
        (24, 0.5) 
        (24, 3.3219280948873626) 
        (35, 4.0) 
        (18, 3.0) 
        (16, 0) 
        (18, 5.7279204545632) 
        (15, 3.5849625007211565) 
        (14, 2.584962500721156) 
        (17, 0) 
        (2, 2.584962500721156) 
        (3, 0) 
        (4, 1.0) 
        (1, 0.5) 
        (2, 3.1699250014423126) 
        (1, 0.5) 
        (1, 0) 
        (40, 0) 
        };
        
      \end{axis}%
    \end{tikzpicture}%
    }%
    \\
    \begin{tikzpicture}[very thick, scale=\plotscale,
      every node/.style={scale=\plotscale}]%
    \begin{customlegend}[legend columns=6,
    legend style={draw=none, column sep=1ex},
    legend entries={
\textsc{FF(GBFS, $\mathcal{D}_{\Q}$)}
,
\textsc{LoLA}
, 
\textsc{Bfc}
}]%

\addlegendmark{GBFS}{square*}

\addlegendmark{LoLA}{+}

\addlegendmark{BFC}{asterisk}

    \end{customlegend}
    \end{tikzpicture}
  \end{center}
    \caption{Length of the returned witness, per tool, compared to the
      length of a shortest witness. \textsc{ICover} is left out as it
      does not return witnesses. \textsc{FF(\astar, $\mathcal{D}_{\Q}$)}, \textsc{FF(Dijkstra)} and \textsc{mist} are left out
      as they are guaranteed to return shortest witnesses.}\label{fig:path-comparison}
    
\end{figure}
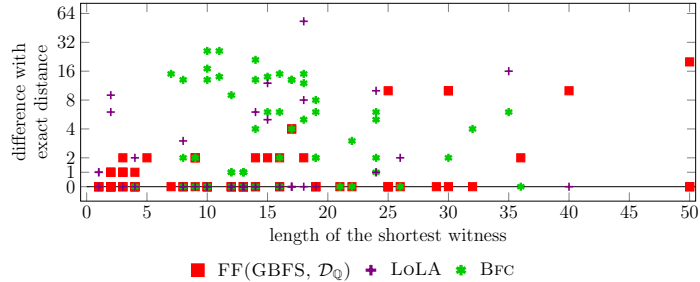

Still, the graph shows that reported witnesses can be far from
minimal. For example, on one instance \textsc{LoLA} returns a witness
that is 53 transitions longer than the one of
\textsc{FastForward(\astar, $\mathcal{D}_{\Q}$)}. Still, \textsc{LoLA}
returns a shortest witness on 28 out of 43 instances. Similarly,
\textsc{FastForward(GBFS, $\mathcal{D}_{\Q}$)} finds a shortest path
on 60 out of 83 instances\footnote{These numbers disregard instances
  where the tool did not finish or where a shortest witness is not
  known, \ie\ no method guaranteeing one finished in time.}.  In
contrast, \textsc{mist} finds a shortest witness on all instances
since its backward algorithm is guaranteed to do so on unweighted
Petri nets, which constitute all of our instances. Again, this
approach is tailored to coverability and cannot be lifted to
reachability.

\subsubsection*{Heuristics and pruning.}

We briefly discuss the quality of the heuristics and the impact of
pruning. The
left-hand side of \Cref{fig:other} compares the exact distance to the
estimated distance from the initial marking.\footnote{\textsc{Z3}
  reported two non optimal solutions which explains the two points
  above the line.} It shows that it is incredibly accurate for all
$\G$-distances, but even more so for $\G = \Q_{\geq 0}$. We
experimented with this distance using the logical translation
of~\cite{BFHH17} and \textsc{Z3}~\cite{MB08} as the optimization
modulo theories solver. At present, it appears that the gain in
estimate quality does not compensate for the extra computational cost.

As depicted on the right-hand side of \Cref{fig:other}, pruning can
make some instances trivial, but in general, many challenging
instances remain so. On average, around $50$\% of places and $40$\% of
transitions were pruned.

\colorlet{colNeq}{black}
\definecolor{colQeq}{HTML}{9393FF}
\colorlet{colCont}{magenta}
\begin{figure}[!h]
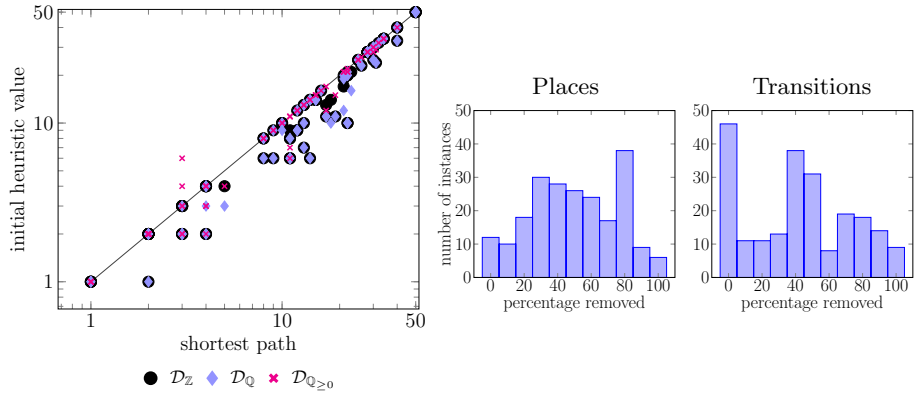

  \centering
  \begin{minipage}[c]{0.43\textwidth}
    \vspace{0pt}
    \heuristicplot{shortest path}{initial heuristic value}{

    \addplot[darkgray] plot coordinates {(1,1) (50, 50)};

    \addplot[only marks, mark=*, color=colNeq, mark size=3 pt] plot coordinates {(13, 13) (67, 67) (10, 10) (4, 4) (32, 32) (28, 28) (50, 50) (25, 25) (50, 50) (28, 28) (34, 34) (34, 34) (28, 28) (28, 28) (30, 30) (40, 40) (30, 30) (28, 28) (34, 34) (50, 50) (28, 28) (50, 50) (28, 28) (40, 40) (25, 25) (15, 14) (22, 20) (15, 14) (21, 20) (22, 20) (15, 14) (1, 1) (2, 2) (3, 3) (4, 4) (4, 2) (2, 2) (2, 2) (3, 2) (3, 3) (4, 4) (3, 3) (2, 2) (2, 2) (5, 4) (1, 1) (2, 2) (3, 3) (2, 2) (2, 2) (2, 1) (3, 3) (1, 1) (4, 2) (4, 4) (3, 2) (3, 2) (2, 2) (1, 1) (17, 13) (14, 6) (11, 9) (8, 8) (9, 6) (21, 19) (14, 14) (11, 8) (8, 8) (17, 11) (13, 10) (16, 16) (31, 24) (18, 14) (22, 10) (21, 17) (14, 14) (13, 13) (12, 12) (11, 6) (12, 9) (13, 7) (9, 9) (30, 25) (23, 21) (11, 8) (22, 10) (11, 6) (8, 6) (40, 33) (26, 23) (22, 21) (19, 11) (12, 9) (10, 10)};

    \addplot [only marks, mark=diamond*, color=colQeq, mark size=2.75pt] plot coordinates {(13, 13) (67, 67) (10, 10) (4, 4) (32, 32) (28, 28) (50, 50) (25, 25) (50, 50) (28, 28) (34, 34) (34, 34) (28, 28) (28, 28) (30, 30) (40, 40) (30, 30) (28, 28) (34, 34) (50, 50) (28, 28) (50, 50) (28, 28) (40, 40) (25, 25) (15, 14) (22, 20) (15, 14) (21, 20) (22, 20) (15, 14) (1, 1) (2, 2) (3, 2) (4, 4) (4, 2) (2, 2) (2, 2) (3, 2) (3, 2) (4, 3) (3, 3) (2, 2) (2, 2) (5, 3) (1, 1) (2, 2) (3, 2) (2, 2) (2, 2) (2, 1) (3, 2) (1, 1) (4, 2) (4, 4) (3, 2) (3, 2) (2, 2) (1, 1) (17, 11) (14, 6) (11, 8) (8, 8) (9, 6) (21, 19) (14, 14) (11, 6) (8, 8) (17, 11) (13, 10) (16, 16) (31, 24) (18, 10) (22, 10) (21, 12) (14, 14) (13, 13) (12, 12) (11, 6) (12, 9) (13, 7) (9, 9) (30, 25) (23, 16) (11, 6) (22, 10) (11, 6) (8, 6) (40, 33) (26, 23) (22, 21) (19, 11) (12, 9) (10, 9)};

    \addplot [only marks, mark=x, color=colCont, mark size = 2pt, thick] plot coordinates {(67, 67) (10, 10) (4, 4) (32, 32) (28, 28) (25, 25) (34, 34) (34, 34) (28, 28) (30, 30) (40, 40) (30, 30) (34, 34) (28, 28) (25, 25) (15, 15) (22, 21) (15, 15) (21, 21) (22, 21) (15, 15) (1, 1) (2, 2) (3, 2) (4, 3) (4, 4) (2, 2) (2, 2) (3, 4) (4, 3) (2, 2) (2, 2) (5, 4) (1, 1) (2, 2) (3, 6) (2, 2) (2, 2) (3, 2) (1, 1) (4, 2) (4, 4) (3, 3) (3, 3) (2, 2) (1, 1) (17, 12) (11, 11) (8, 8) (9, 9) (21, 21) (14, 14) (11, 6) (8, 8) (17, 17) (13, 13) (16, 16) (31, 29) (14, 14) (13, 13) (12, 12) (11, 11) (12, 12) (9, 9) (30, 28) (11, 7) (11, 11) (8, 8) (26, 26) (22, 22) (19, 15) (12, 12) (10, 10)};

    }

    \hspace*{30pt}\centering
\begin{tikzpicture}[very thick, scale=\plotscale,
    every node/.style={scale=\plotscale}]
\begin{customlegend}[legend columns=3,
    legend style={draw=none, column sep=1ex},
legend entries={
\textsc{$\mathcal{D}_\Z$},
\textsc{$\mathcal{D}_\Q$},
\textsc{$\mathcal{D}_{\Qnon}$},
  }] 
\addlegendmark{Neq}{*}
\addlegendmark{Qeq}{diamond*}
\addlegendmark{Cont}{x}
\end{customlegend}
\end{tikzpicture}
\end{minipage}%
\hspace*{10pt}
\begin{minipage}[c]{0.55\textwidth}
  \vspace{0pt}
  \begin{tabular}{cc}
    \footnotesize\, Places & \footnotesize\, Transitions \\
    \input{benchmark-figs/fig-pruning-places.tex} &
    \input{benchmark-figs/fig-pruning-transitions.tex}
  \end{tabular}
\end{minipage}

\caption{\emph{Left}: initial distance estimation compared to the
  exact distance (points closer to the diagonal are
  better). \emph{Right}: number of instances per percentage of places
  (left) and transitions (right) removed by pruning (rounded to
  nearest multiple of~10).}\label{fig:other}
\end{figure}


\section{Conclusion}
\label{sec:conclusion}
We presented an efficient approach to the Petri net reachability
problem that uses state-space over-approximations as distance oracles
in the classical graph traversal algorithms \astar\ and greedy
best-first search. Our experiments have shown that using the state
equation over $\Qnon^T$ provides the best trade-off between
computational feasibility and the accuracy of the oracle. However, we
expect that further advances in optimization modulo theories solvers
may enable employing stronger over-approximations such as continuous
Petri nets in the future.

Moreover, non-algebraic distance under-approximations also fit
naturally in our framework, \eg\ the syntactic distance of~\cite{S14}
and ``$\alpha$-graphs'' of~\cite{FMWDR17}. These are crude
approximations with low computational cost. Our preliminary tests show
that, although they could not compete with our distances, they can
provide early speed-ups on instances with large branching factors. An
interesting line of research consists in identifying cheap
approximations with better estimates.

We wish to emphasize that our approach to the reachability problem has
the potential to also be naturally used for semi-deciding reachabiltiy
in extensions of Petri nets with a recursively enumerable reachability
problem, such as Petri nets with resets and
transfers~\cite{AK76,DFS98} as well as colored Petri
nets~\cite{Jen13}. These extensions have, for instance, been used for
the generation of program loop invariants~\cite{SK19}, the validation
of business processes~\cite{WAHE09} and the verification of
multi-threaded \textsc{C} and \textsc{Java} program skeletons with
communication primitives~\cite{DRV02,KKW14}. Linear rational and
integer arithmetic over-approximations for such extended Petri nets
exist~\cite{CHH18,BHM18,HLT17,GSAH19} and could smoothly be used
inside our framework.


\section*{Acknowledgments}

We thank Juliette Fournis d'Albiat for her help with extracting the
\textsc{sypet} suite.

\bibliographystyle{splncs04}
\bibliography{references}

\appendix
\section{A primer on applications of Petri net reachability}

\label{sec:motivation}
This section provides two representative examples from the literature
that illustrate the important role of Petri net reachability. They
allow us to underpin our claim that it is desirable to find shortest
paths witnessing reachability. Our examples come from program
synthesis and concurrent program analysis. We conclude this section
with a brief discussion on further applications.

\paragraph*{Program synthesis.}

The authors of~\cite{FMWDR17} and~\cite{GJJZWJP20} have recently
employed the Petri net reachability problem for automated program
synthesis.  In their setting, one is given an API containing hundreds
or thousands of functions, together with a type signature and a number
of test cases. The goal is to automatically synthesize a loop-free
program using functions from the API that respects the specified type
signature and satisfies the given test cases.

\begin{figure}[!h]
  \centering
  \footnotesize
  \begin{tabular}{ |l| }
    \hline
    \multicolumn{1}{|c|}{\code{java.awt.geom}} \\
    \hline 
    \code{\codeword{new} AffineTransformation()}\\
    \code{Shape Shape.createTransformedShape(AffineTransformation)}\\
    \code{String Point2D.ToString()}\\
    \code{\codeword{double} Point2D.getX()}\\
    \code{\codeword{double} Point2D.getY()}\\
    \code{\codeword{void} AffineTransformation.setToRotation(\codeword{double}, \codeword{double}, \codeword{double}) }\\
    \code{\codeword{void} AffineTransformation.invert()}\\
    \code{Area Area.createTransformedArea(AffineTransformation)}\\
    \hline
  \end{tabular}    
  \caption{A small sample of methods from library \code{java.awt.geom}.}\label{fig:sypet-api}
\end{figure}

Let us illustrate the approach with an example
from~\cite{FMWDR17}. Suppose we have access to library
\code{java.awt.geom}, and we wish to synthesize a function
\code{rotate} with type signature
\begin{center}
  \code{Area rotate(Area object, Point2D point, double
    angle)}.
\end{center}
Naturally, the function should rotate the supplied \code{Area} around
\code{point} by \code{angle} degrees. We assume the
\code{java.awt.geom} library is sufficient for this task in that it
contains the functions needed to synthesize the
method. \Cref{fig:sypet-api} presents an excerpt of functions
contained in the API\@.

The authors of~\cite{FMWDR17} suggest to view an API as a Petri net
whose places correspond to types and transitions correspond to API
functions which, informally speaking, consume input types and produce
an output type. \Cref{fig:sypet-pn} illustrates the Petri net
corresponding to the excerpt of API functions listed in
\Cref{fig:sypet-api}. To synthesize the \code{rotate} function above,
we start with tokens in the places corresponding to the input
parameters of our function. Thus, in \Cref{fig:sypet-pn} we have one
token in each of the places corresponding to \code{Area},
\code{Point2D} and \code{double}. The goal is then to reach a marking
with a single token in the place corresponding to the return type. In
our example, we aim for one token in \code{Area}, and no token in any
other place. This corresponds to invoking a sequence of functions that
``use up'' all input parameters, and finally return the correct
type. To allow reuse of variables, additional ``copy'' transitions are
introduced for each place; they take one token from a place and put
two tokens back. If the target marking is reachable, then the
witnessing path corresponds to a partial sketch of a program.

For example, the path
\begin{multline*}
  \text{copy}_{\text{Point2D}} \rightarrow \text{GetY} \rightarrow \text{GetX}
  \rightarrow \text{new AffineTransformation} \rightarrow \\
  \text{copy}_{\text{AffineTransformation}} \rightarrow \text{setToRotation}
  \rightarrow \text{createTransformedArea} 
\end{multline*}
tells us which functions to apply, and in which order to apply them.
Since Petri nets do not store information about the identity of
tokens, when we have multiple objects of the same type, we do not know
which to supply as an argument to which function. This can be figured
out by a separate process involving SAT solving (see~\cite{FMWDR17}
for more details).

As discussed in~\cite{FMWDR17}, finding short paths of the Petri net
is a natural goal. Indeed, since short programs are easier to test,
there are fewer possibilities for the arguments of each function, and
it is easier for humans to verify that the synthesized program has the
desired functionality.

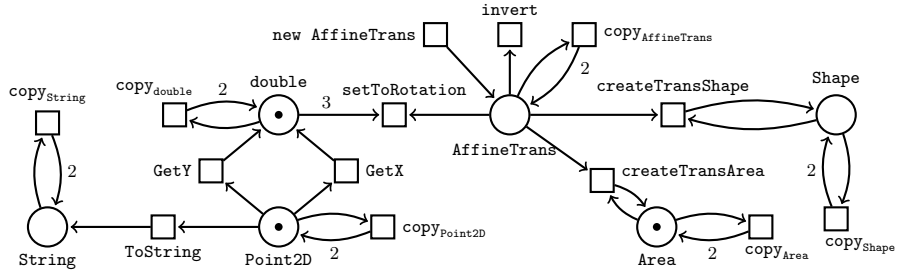
\begin{figure}[!h]
    \centering
    \begin{tikzpicture}[auto, thick, transform shape, scale=0.9,
                        every node/.style={scale=0.85}, font=\large,
                        font=\ttfamily]
      \node[place, label={[label distance=1pt, xshift=-2pt]267:{AffineTrans}}]
      (affine-transform) {};

      \node[place, label={Shape}, right=120pt of affine-transform] (shape) {};
      
      \node[below right=60pt and 88pt of affine-transform]
      (AreaHelperPoint) {};
      
      \node[] (AreaHelperMiddlePoint)
      at ($(affine-transform)!0.4!(AreaHelperPoint)$) {};
      
      \node[tokens=1, place, label=above:{double}, left=80pt of
        affine-transform] (double) {};
      
      \node[tokens=1, place, label=below:{Point2D}, below=30pt of
        double] (Point2D) {};
      
      \node[tokens=1, place, label=below:{Area}, xshift=27pt] (Area) at
      ($(AreaHelperMiddlePoint|-Point2D)$) {};
      
      \node[place, label=below:{String}, left=80pt of Point2D] (String) {};
            
      \node[transition, label={[xshift=5pt]above:{setToRotation}}]
      (setToRotation) at ($(affine-transform)!0.5!(double)$) {}
      edge [pre]                               (affine-transform)
      edge [pre] node[swap, xshift=-7pt] {$3$} (double);
      
      \node[transition, label=above:{invert}, above=20pt of affine-transform]
      (invert) {}
      edge [pre] (affine-transform);
      
      \node[transition, label=above:{createTransShape}]
      (createTransformedShape) at ($(affine-transform)!0.5!(shape)$) {}
      edge [pre]                 (affine-transform)
      edge [pre,  bend right=15] (shape)
      edge [post, bend  left=15] (shape);
      
      \node[transition, label=below:{ToString}] (Point2D ToString)
      at ($(Point2D)!0.5!(String)$) {}
      edge [pre]  (Point2D)
      edge [post] (String);
      
      \node[transition, label=right:{GetX}] (Point2D GetX)
      at ($(Point2D)!0.5!(double) + (1, 0)$) {}
      edge [pre]  (Point2D)
      edge [post] (double);
      
      \node[transition, label=left:{GetY}] (Point2D GetY)
      at ($(Point2D)!0.5!(double) + (-1, 0)$) {}
      edge [pre]  (Point2D)
      edge [post] (double);
      
      \node[transition, label={[yshift=5pt]right:{createTransArea}}]
      (CreateTransformedArea) at ($(AreaHelperMiddlePoint)$) {}
      edge [pre] (affine-transform)
      edge [pre,  bend right=20] (Area)
      edge [post, bend  left=20] (Area);
      
      \node[transition, label=left:{new AffineTrans},
            above left=22.5pt and 20pt of affine-transform] (setRotation) {}
      edge [post] (affine-transform);
      
      \node[transition, label=right:{$\text{copy}_{\text{AffineTrans}}$},
            above right=22.5pt and 20pt of affine-transform] (copy affine) {}
      edge [pre,  bend right=20] (affine-transform)
      edge [post, bend  left=20] node[label=above right:{$2$}] {}
      (affine-transform);
      
      \node[transition,
            label={[xshift=8pt]below:{$\text{copy}_{\text{Area}}$}},
            right=30pt of Area] (copy affine) {}
      edge [pre,  bend right=20]            (Area)
      edge [post, bend  left=20] node {$2$} (Area);
      
      \node[transition,
            label={[xshift=12pt]below:{$\text{copy}_{\text{Shape}}$}},
            below=30pt of shape] (copy affine) {}
      edge [pre,  bend right=20]            (shape)
      edge [post, bend  left=20] node {$2$} (shape);
      
      \node[transition,
            label={[xshift=-10pt]above:{$\text{copy}_\text{double}$}},
            left=30pt of double] (copy affine) {}
      edge [pre,  bend right=20]            (double)
      edge [post, bend  left=20] node {$2$} (double);
      
      \node[transition,
            label={[yshift=-2pt]right:{$\text{copy}_{\text{Point2D}}$}},
            right=30pt of Point2D] (copy affine) {}
      edge [pre,  bend right=20]              (Point2D)
      edge [post, bend  left=20] node[] {$2$} (Point2D);
      
      \node[transition, label=above:{$\text{copy}_{\text{String}}$},
            above=30pt of String] (copy affine) {}
      edge [pre,  bend right=20]            (String)
      edge [post, bend  left=20] node {$2$} (String);      
    \end{tikzpicture}
    \vspace*{-5pt} 
    \caption{A Petri net modelling the API of \Cref{fig:sypet-api}.}%
    \label{fig:sypet-pn}
\end{figure}

\paragraph*{Concurrent program analysis.}

Perhaps most prominently, Petri nets have been used in order to model
and analyze concurrent processes.  Let us begin with a simple example
illustrating how the Petri net reachability problem can be used in
order to detect race conditions in concurrent programs. Consider
function \code{fun()} of \Cref{fig:concur:ex} in which \code{s} is a
global shared Boolean variable. If there is a single thread running
\code{fun()}, then the condition of the \code{if}-statement in Line~3
never evaluates to true and an error cannot occur. However, if there
are two independently interleaved threads running \code{fun()}, it is
possible that one thread reaches Line~3 whilst \code{s} is set to 1,
which means an error could occur.

\begin{figure}
  \lstinputlisting[language=python,firstnumber=0]{concur_ex.py}
  \caption{Simple program with a potential race condition.}\label{fig:concur:ex}
\end{figure}

In more technical terms, we consider non-recursive Boolean programs in
which an unbounded number of identical programs run in parallel. The
authors of~\cite{GS92} showed that verifying safety properties of such
concurrent programs can be reduced to the \emph{coverability problem}
for Petri nets using a technique called counter abstraction. The
coverability problem is a weaker version of the reachability
problem. Given a target marking, the coverability problem asks whether
it is possible to reach a marking in which every place carries at
least as many tokens as specified by the target marking. The Petri net
obtained by applying the approach of~\cite{GS92} to the program from
\Cref{fig:concur:ex} is depicted in \Cref{fig:concur:pn:ex}. The
places on the top of the Petri net correspond to the program locations
of \Cref{fig:concur:ex}.  Tokens in each of the places on the top
count the number of threads which are currently at the respective
program location, which is a form of counter abstraction. At any time,
transition \code{fun()} can add tokens to \code{loc$_1$}, reflecting
that a new thread executing \code{fun()} can be spawned at any point
in time arbitrarily often. The two places on the bottom encode the
state of the Boolean variable \code{s} which is updated whenever a
transition moves tokens from \code{loc$_1$} to \code{loc$_2$}, or from
\code{loc$_2$} to \code{loc$_3$}. Determining whether an error can
occur then reduces to deciding whether the marking $[\code{Err} \colon
  1]$ is coverable, \ie, whether there is an interleaving in which at
least one thread produces an error.
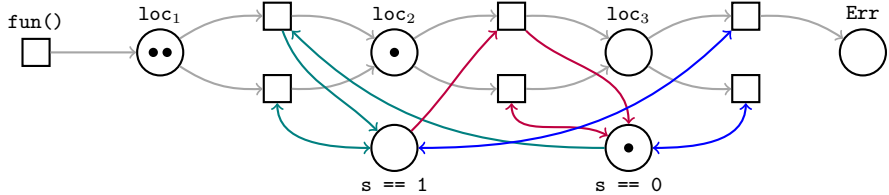
\begin{figure}
  \begin{tikzpicture}[auto, thick, node distance=1.25cm,
                      scale=0.9, transform shape]
    \newcommand{\hgap}{36pt}
    \newcommand{\vgap}{2pt}

    \node[transition]                                               (tin) {};
    \node[place,            right=          \hgap of tin, tokens=2] (p1)  {};
    \node[transition, above right=\vgap and \hgap of p1]            (t1l) {};
    \node[transition, below right=\vgap and \hgap of p1]            (t1r) {};
    \node[place,      below right=\vgap and \hgap of t1l, tokens=1] (p2)  {};
    \node[transition, above right=\vgap and \hgap of p2]            (t2l) {};
    \node[transition, below right=\vgap and \hgap of p2]            (t2r) {};
    \node[place,      below right=\vgap and \hgap of t2l]           (p3)  {};
    \node[transition, above right=\vgap and \hgap of p3]            (t3l) {};
    \node[transition, below right=\vgap and \hgap of p3]            (t3r) {};
    \node[place,      below right=\vgap and \hgap of t3l]           (p4)  {};

    \node[above=-1pt of p1]  {$\texttt{loc}_1$};
    \node[above=-1pt of p2]  {$\texttt{loc}_2$};
    \node[above=-1pt of p3]  {$\texttt{loc}_3$};
    \node[above=.5pt of p4]  {$\texttt{Err}$};
    \node[above= 0pt of tin] {$\texttt{fun()}$};

    \newcommand{\eangle}{18}
    
    \path[->, gray!70]
    (tin) edge[]                   node[] {} (p1)
    (p1)  edge[bend  left=\eangle] node[] {} (t1l)
    (p1)  edge[bend right=\eangle] node[] {} (t1r)
    (t1l) edge[bend  left=\eangle] node[] {} (p2)
    (t1r) edge[bend right=\eangle] node[] {} (p2)
    (p2)  edge[bend  left=\eangle] node[] {} (t2l)
    (p2)  edge[bend right=\eangle] node[] {} (t2r)
    (t2l) edge[bend  left=\eangle] node[] {} (p3)
    (t2r) edge[bend right=\eangle] node[] {} (p3)
    (p3)  edge[bend  left=\eangle] node[] {} (t3l)
    (p3)  edge[bend right=\eangle] node[] {} (t3r)
    (t3l) edge[bend  left=\eangle] node[] {} (p4)
    ;

    \newcommand{\sgap}{20pt}
    
    \node[place, below=\sgap of p2]           (s1) {};
    \node[place, below=\sgap of p3, tokens=1] (s0) {};

    \node[below=-1pt of s1] {$\texttt{s == 1}$};
    \node[below=-1pt of s0] {$\texttt{s == 0}$};

    \path[<->, teal]
    (s1)  edge[out=180, in=-90]  node[] {} (t1r)
    ;

    \path[->, teal]
    (s0)  edge[out=180, in=-45]  node[] {} (t1l)
    (t1l) edge[out=-70, in=135]  node[] {} (s1)
    ;

    \path[<->, purple]
    (s0)  edge[out=150, in=-90]  node[] {} (t2r)
    ;

    \path[->, purple]
    (s1)  edge[out= 45, in=-135] node[] {} (t2l)
    (t2l) edge[out=-45, in=  90] node[] {} (s0)
    ;

    \path[<->, blue]
    (s1)  edge[out=0, in=-135]  node[] {} (t3l)
    (s0)  edge[out=0, in= -90]  node[] {} (t3r)
    ;
  \end{tikzpicture}%
  \caption{The Petri net modeling the program of \Cref{fig:concur:ex}.
    A token in place $\texttt{loc}_i$ represents a thread at program
    location $i$, and a token in place \texttt{s == $b$} indicates
    that variable \texttt{s} has value $b$. Transition \texttt{fun()}
    spawns threads. A bidirectional arc $p \leftrightarrow t$
    abbreviates two arcs: $p \rightarrow t$ and $t \rightarrow
    p$. Colors are only meant to help
    readability.}\label{fig:concur:pn:ex}
\end{figure}

In stark contrast to the reachability problem, it was shown
in~\cite{Rac78} that the coverability problem belongs
to \EXPSPACE\@. There is a natural reduction from the coverability
problem to the reachability problem: by introducing additional
transitions that can non-deterministically remove tokens from every
place corresponding to program lines, a target marking in the original
Petri net is coverable iff it is reachable in the Petri net with the
additional transitions. Alternatively, deciding coverability can be
rephrased as the problem of determining whether an upward-closed set
of markings is reachable in the directed graph induced by a given
Petri net, which is the approach that we take.

\paragraph*{Further applications}

The authors of~\cite{FKP14} show how proofs involving counting
arguments, which can, for instance, naturally prove properties of
concurrent programs with recursive procedures, can automatically be
synthesized by a reduction to the Petri net reachability problem. The
authors of~\cite{GM12} propose a model for reasoning about finite-data
asynchronous programs. They show that proving liveness properties of
such programs in their model is inter-reducible with the Petri net
reachability problem. In a broader context, it was shown that various
verification problems for population protocols, a formal model of
sensor networks, reduce to the Petri net reachability
problem~\cite{EGLM17}. The authors of~\cite{DLV19} develop a method
that allows for verifying rich models of data-driven workflows by a
reduction to the coverability problem for Petri nets. See also
survey~\cite{Mur89} for further classical application areas of Petri
nets and their extensions.


\section{Missing proofs of \Cref{sec:search}}

Recall the following invariant satisfied by
Algorithm~\ref{alg:directed-search}:
\begin{multline}
  \text{if $g(v) \neq \infty$, then $g(v)$ is the weight of a path
    from $s$ to $v$ in $G$} \\[-3pt] \text{whose nodes were all
    expanded, except possibly $v$}.\tag*{($\ast$)}\label{eq:g:path2}
\end{multline}

We prove this lemma from the main text:

\LemGreedy*

\begin{proof}
  Let $d \defeq \min\{\mu(e) : e \in E\}$.
  \begin{enumerate}
  \item Any path of weight at most $c$ traverses at most $k \defeq \lceil c
    / d\rceil$ edges. Since the graph has finite out-degree, the
    number of paths from $s$ using at most $k$ edges is
    finite.\smallskip

  \item Suppose the claim false. We have $\dist_G(v_0, t) > \dist_G(v_1,
    t) > \cdots$ for some $v_0, v_1, \ldots \in W$. Let $k \defeq
    \lceil \dist_G(v_0, t) / d\rceil$. Let $V_{\leq k}$ be the set of
    nodes that can reach $t$ by traversing at most $k$ edges. Since $G$ has
    finite in-degree, $V_{\leq k}$ is finite. Moreover, any node $v
    \in V \setminus V_{\leq k}$ is such that $\dist_G(v, t) > k \cdot d
    \geq \dist_G(v_0, t)$. Hence, $\{v_0, v_1, \ldots\} \subseteq
    V_{\leq k}$ is finite, which is a contradiction.\smallskip

  \item For the sake of contradiction, assume a node $v$ is expanded
    infinitely often. Each time node $v$ is expanded, it is removed
    from $C$. Hence, it is reinserted infinitely often in
    $C$. Moreover, each time this happens, value $g(v)$ is
    decreased. Let $q_0, q_1, \ldots \in \Qnon$ denote these
    increasingly smaller values. By~\ref{eq:g:path2}, there is a path
    $\path_i$ from $s$ to $v$ of weight $q_i$ in
    $G$. By~\eqref{itm:path:fin}, $\{\path_i : i \in \N\}$ is finite
    as the weight of these paths is at most $q_0$. This contradicts
    $q_0 > q_1 > \cdots$. \qed
  \end{enumerate}
\end{proof}

\section{Missing proofs of \Cref{ssec:pn:relax}}

\PropDistUnguarded*

\begin{proof}
  Let us prove the case of $d_{\Qnon}$ which was only sketched in the
  main text. The reachability relation of a continuous Petri net can
  be expressed in the existential fragment of linear real arithmetic,
  \ie\ $\mathrm{FO}\langle \Q, +, < \rangle$, the first-order theory
  of the rationals with addition and order~\cite{BFHH17}. More
  precisely, there exists a linear-time computable formula $\psi \in
  \exists\, \mathrm{FO}\langle \Q, +, < \rangle$ such that
  $\psi(\vec{m}, \vec{x}, \vec{m}')$ holds iff
  \[\text{ there exists a sequence }
  \sigma \in ((0, 1] \times T)^* \text{ s.t.\ } \vec{m}
    \creach{\sigma} \vec{m}' \text{ and } \vec{\sigma} = \vec{x}.\]
  Let $\Phi(\vec{m}, \vec{m}', \ell) \defeq \exists \vec{x} \in
  \Qnon^T : \psi(\vec{m}, \vec{x}, \vec{m}') \land \ell = \sum_{t \in
    T} \lambda(t) \cdot \vec{x}(t)$. Formula $\Phi \in \exists\,
  \mathrm{FO}\langle \Q, +, < \rangle$ can be constructed in linear
  time and is such that $\Phi(\vec{m}, \vec{m}', \ell)$ holds for
  $\vec{m}, \vec{m}' \in \Qnon^P$ and $\ell \in \Qnon$ iff $\ell =
  d_{\Qnon}(\vec{m}, \vec{m}')$. Thus, $d_{\Qnon}$ is computable as an
  instance of a decidable optimization modulo theories problem.\qed
\end{proof}

\ThmHUnbounded*

\begin{proof}
  The first was part of the statement was fully shown in the main
  text. Let us prove the second part more formally. Let $\pn = (P, T,
  f, \lambda)$ be a weighted Petri net, let $\vec{m}_\text{target}$ be
  a target marking, and let $h_\G$ be the heuristic obtained from
  $d_\G$ for $\vec{m}_\text{target}$. Observe that $h_{\Q}(\vec{m})
  \leq h_{\G}(\vec{m})$ for every marking $\vec{m}$ and every $\G \in
  \{\Z, \Q, \Qnon\}$. Hence, if $h_{\Q}$ is unbounded, so are all
  three heuristics. Thus, it suffices to prove the case $\G = \Q$.

  For the sake of contradiction, suppose $h_{\Q}$ is not
  unbounded. There exists $b \in \Qnon$ and an infinite sequence of
  pairwise distinct markings $\vec{m}_0, \vec{m}_1, \ldots \in \N^P$
  with $h_{\Q}(\vec{m}_i) \leq b$ for every $i \geq 0$. Let $\vec{x}_i
  \in \Qnon^T$ be a solution to the state equation over $\Qnon$ that
  yields $h_{\Q}(\vec{m}_i)$, \ie\ such that $h_{\Q}(\vec{m}_i) =
  \sum_{t \in T} \lambda(t) \cdot \vec{x}_i(t)$ is minimized subject
  to
  \begin{align}
    \vec{m}_\text{target} &= \vec{m}_i + \sum_{t \in T} \vec{x}_i(t)
    \cdot \effect{t}.\label{eq:bounded:mk:eq}
  \end{align}
  
  Since $\N^P$ is well-quasi-ordered, there exist indices $i_0 < i_1 <
  \cdots$ such that $\vec{m}_{i_0} \leq \vec{m}_{i_1} \leq
  \cdots$. Since these markings are pairwise distinct, we may assume
  w.l.o.g.\ the existence of a place $p \in P$ such that
  $\vec{m}_{i_0}(p) < \vec{m}_{i_1}(p) < \cdots$ (otherwise, we could
  extract such a subsequence).

  Let us define the following constants:
  \[
    c \defeq \min\left\{\lambda(t) : t \in T\right\} \text{ and }
    d \defeq \frac{b \cdot |T| \cdot
             \max\left\{|\effect{t}(p)| : t \in T\right\}}{c}.
  \]
  Let $j \geq 0$ be such that $\vec{m}_\text{target}(p) -
  \vec{m}_{i_j}(p) < -d$. Such an index $j$ exists as $p$ takes
  arbitrarily large values along our infinite
  sequence. By~\eqref{eq:bounded:mk:eq}, we have:
  \begin{align*}
    \sum_{t \in T} \vec{x}_{i_j}(t) \cdot \effect{t}(p)
    &= \vec{m}_\text{target}(p) - \vec{m}_{i_j}(p)
    < -d.
  \end{align*}
  Thus, there exists $s \in T$ such that $\effect{s}(p) <
  0$ and $\vec{x}_{i_j}(s) > b / c$. Indeed, if it was not the case,
  it would be impossible to obtain a negative value smaller than $-d$.

  We are done since we obtain the following contradiction:
  \begin{align*}
    h_{\Q}(\vec{m}_{i_j})
    &= \sum_{t \in T} \lambda(t) \cdot \vec{x}_{i_j}(t)
    && \text{(by definition)} \\
    &\geq \lambda(s) \cdot \vec{x}_{i_j}(s)
    && \text{(by $\lambda(t) > 0$ and $\vec{x}_{i_j}(t) \geq 0$ for each
      $t \in T$)} \\
    &> \lambda(s) \cdot (b / c)
    && \text{(by $\lambda(s) > 0$ and $\vec{x}_{i_j}(s) > b / c$)} \\
    &\geq \lambda(s) \cdot (b / \lambda(s))
    && \text{(by $\lambda(s) \geq c$)} \\
    &= b \\
    &\geq h_{\Q}(\vec{m}_{i_j})
    && \text{(by boundedness).}\tag*{\qed}
  \end{align*}
\end{proof}

\section{Experimental results}

\Cref{fig:reach:all:prepruned} depicts an evaluation on reachability
instances where all tools were given the pruned Petri nets
(preprocessing time not included for any tool). The results are
essentially the same as those of~\Cref{fig:reach:all}.

\begin{figure}[h]
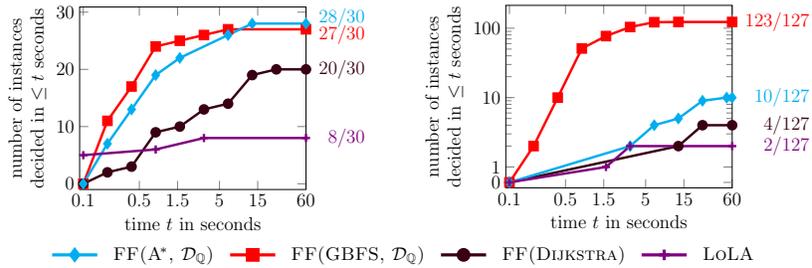

  \centering  
  \begin{minipage}[t]{0.45\textwidth}%
      \vspace*{0pt}%
      \input{benchmark-figs/fig-sypet-prepruned.tex}%
    \end{minipage}%
  \hspace*{5pt}%
  \begin{minipage}[t]{0.45\textwidth}%
      \vspace*{0pt}%
      \input{benchmark-figs/fig-reachability-prepruned.tex}%
  \end{minipage}%
  
  \begin{tikzpicture}[very thick, scale=\plotscale,
      every node/.style={scale=\plotscale}]
    \begin{customlegend}[legend columns=4,
        legend style={draw=none, column sep=1ex},
        legend entries={
          \textsc{FF(\astar, $\mathcal{D}_{\Q}$)},
          \textsc{FF(GBFS, $\mathcal{D}_{\Q}$)},
          \textsc{FF(Dijkstra)},
          \textsc{LoLA}},]
      \addlegend{FastForward QMarkingEQGurobi+Competitive}{diamond*}
      \addlegend{GBFS}{square*}
      \addlegend{FastForward zero+Competitive}{*}
      \addlegend{Lola}{+}
    \end{customlegend}
  \end{tikzpicture}
  \caption{Cumulative number of reachability instances decided over
    time (on pre-pruned instances). \emph{Left}: \textsc{sypet} suite
    (semi-log scale). \emph{Right}: \textsc{random-walk} suite (log
    scale).}\label{fig:reach:all:prepruned}
\end{figure}

\section{Structural distance}

\newcommand{\tin}[1]{\mathrm{in}(#1)}
\newcommand{\tout}[1]{\mathrm{out}(#1)}
\newcommand{\places}[1]{\supp{#1}}

All three $\G$-distances presented in the main text have an algebraic
flavor. While their complexity is significantly lower than the
non-elementary time complexity of Petri net reachability, they involve
solving optimization problems. An alternative avenue, mentioned in the
conclusion, consists in constructing less precise but more efficient
distance under-approximation based on structural properties.

We describe such a distance under-approximation adapted from the
syntactic distance of~\cite{S14} and related to the
``$\alpha$-graphs'' used by~\cite{FMWDR17}. Let $\pn = (P, T, f,
\lambda)$ be a weighted Petri net. The \emph{structural abstraction}
of $\pn$ is a weighted graph $G_\text{struct}(\pn)$ with places as
nodes with an edge $(p, t, q)$ iff transition $t$ consumes tokens from
$p$ and produces tokens into $q$. Since some transitions may consume
or produce no token, we imagine these as consuming from, or producing
to, an artificial ``sink place'' $\bot$. Intuitively, if $\vec{m}$ can
reach $\vec{m}'$, then each token of $\vec{m}$ must either make its
way to $\vec{m}'$ or disappear. Of course, tokens cannot move
independently and freely in $\pn$. However, paths in
$G_\text{struct}(\pn)$ yield a lower bound on an actual path from
$\vec{m}$ to $\vec{m}'$. A structural abstraction is given in
\Cref{fig:structural}.

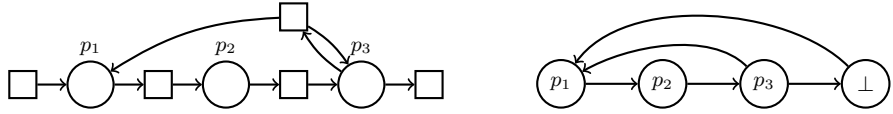
\begin{figure}[!h]
  \centering
  \begin{tikzpicture}[auto, thick, node distance=1cm, scale=0.9, transform shape]
  \tikzstyle{vertex} = [circle, draw, minimum width=20pt];
    
  \node[place,                   label={[yshift=-1pt]above:{$p_1$}}] (p1) {};
  \node[transition, right of=p1]                                     (t1) {};
  \node[place,      right of=t1, label={[yshift=-1pt]above:{$p_2$}}] (p2) {};
  \node[transition, right of=p2]                                     (t2) {};
  \node[place,      right of=t2, label={[yshift=-1pt]above:{$p_3$}}] (p3) {};
  \node[transition, right of=p3]                                     (t3) {};
  \node[transition, above of=t2]                                     (t4) {};
  \node[transition, left  of=p1]                                     (t5) {};
  
  \path[->]
  (t5) edge[] node[] {} (p1)
  (p1) edge[] node[] {} (t1)
  (t1) edge[] node[] {} (p2)
  (p2) edge[] node[] {} (t2)
  (t2) edge[] node[] {} (p3)
  (p3) edge[] node[] {} (t3)
  
  (p3) edge[bend  left=12] node[] {} (t4)
  (t4) edge[bend  left=12] node[] {} (p3)
  (t4) edge[bend right=15] node[] {} (p1)
  ;
\end{tikzpicture}%
\hspace*{35pt}%
\begin{tikzpicture}[auto, thick, node distance=1.5cm, scale=0.9, transform shape]
  \tikzstyle{vertex} = [circle, draw, minimum width=20pt];
 
  \node[vertex]              (p1)   {$p_1$};
  \node[vertex, right of=p1] (p2)   {$p_2$};
  \node[vertex, right of=p2] (p3)   {$p_3$};
  \node[vertex, right of=p3] (sink) {$\bot$};
 
  \path[->]
  (p1)   edge[] node[] {} (p2)
  (p2)   edge[] node[] {} (p3)
  (p3)   edge[] node[] {} (sink)

  (p3)   edge[out=135, in=30, looseness=0.8] node[] {} (p1)
  (sink) edge[out=135, in=55, looseness=0.8] node[] {} (p1)
  ;
\end{tikzpicture}
%
%
  \caption{\emph{Left:} A Petri net $\pn$. \emph{Right:} Its structural
    abstraction $G_\text{struct}(\pn)$.}\label{fig:structural}
\end{figure}

Formally, let $\tin{t} \defeq \{p \in P : f(p, t) > 0\}$ be the set of
input places of $t$ if it is nonempty, and $\tin{t} \defeq \{\bot\}$
otherwise; and let $\tout{t} \defeq \{p \in P : f(t, p) > 0\}$ be the
set of output places of $t$ if it is nonempty, and $\tout{t} \defeq
\{\bot\}$ otherwise. We define $G_\text{struct}(\pn) \defeq (V, E, T,
\mu)$ with $V \defeq P \cup \{\bot\}$, $\mu(p, t, q) \defeq
\lambda(t)$ and
\begin{align*}
  E &\defeq \left\{(p, t, q) : p \neq q, t \in T, p \in \tin{t} \text{
    and } \tout{t} \ni q\right\}.
\end{align*}

We obtain the \emph{structural distance} $d_\text{struct} \colon \N^P
\times \N^P \to \Qnon \cup \{\infty\}$ defined as follows. For every
marking $\vec{m}$, let $\places{\vec{m}} \defeq \{p \in P : \vec{m}(p)
> 0\} \cup \{\bot\}$ be the places marked in $\vec{m}$ together with
$\bot$ (considered permanently marked). Let:
\begin{align*}
  d_\text{struct}(\vec{m}, \vec{m}') &\defeq
  \max\left\{\kappa_{\vec{m}'}(p) : p \in \places{\vec{m}}\right\}, \text{
    where } \\
  \kappa_{\vec{m}'}(p) &\defeq \min\left\{\dist_{G_\text{struct}}(p, q)
  : q \in \places{\vec{m}'}\right\}.
\end{align*}

Informally, $\kappa_{\vec{m}'}(p)$ is the distance required to freely
move a token from place $p$ to a place marked in $\vec{m}'$, or to
destroy it. Since every token of $\vec{m}$ must achieve this task,
$d_\text{struct}$ maximizes $\kappa_{\vec{m}'}(p)$ among all places
marked in $\vec{m}$. Consider the Petri net of \Cref{fig:structural}
with $\vec{m} \defeq [p_1 \colon 0, p_2 \colon 1, p_3 \colon 1]$ and
$\vec{m}' \defeq [p_1 \colon 1, p_2 \colon 0, p_3 \colon 0]$. We have
$d_\text{struct}(\vec{m}, \vec{m}') = 2$ since $\kappa_{\vec{m}'}(p_2)
= 2$ and $\kappa_{\vec{m}'}(p_3) = 1$.

We show that $d_\text{struct}$ is an under-approximation by first
proving a lemma:

\begin{lemma}\label{lem:abstr}
  If $\vec{m} \reach{\sigma} \vec{m}'$, then for every $p \in
  \places{\vec{m}}$ there exists a path of weight at most
  $\lambda(\sigma)$ from $p$ to some $q \in \places{\vec{m}'}$ in
  $G_\text{struct}(\pn)$.
\end{lemma}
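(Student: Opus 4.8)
The plan is to prove Lemma~\ref{lem:abstr} by induction on the length of the firing sequence $\sigma$. The base case $\sigma = \varepsilon$ is immediate: then $\vec{m} = \vec{m}'$, so $\places{\vec{m}} = \places{\vec{m}'}$, and for any $p \in \places{\vec{m}}$ the trivial (empty) path from $p$ to $p \in \places{\vec{m}'}$ has weight $0 = \lambda(\varepsilon)$. For the inductive step, write $\sigma = t\sigma'$ with $\vec{m} \reach{t} \vec{m}'' \reach{\sigma'} \vec{m}'$, so that $\lambda(\sigma) = \lambda(t) + \lambda(\sigma')$.

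The heart of the argument is a case distinction on a fixed place $p \in \places{\vec{m}}$, designed to produce a place in $\places{\vec{m}''}$ reachable from $p$ by a path of weight at most $\lambda(t)$, after which the induction hypothesis applied to $\vec{m}'' \reach{\sigma'} \vec{m}'$ finishes the job by concatenation. The first case is $p \notin \tin{t}$, i.e.\ $t$ does not consume a token from $p$ (and $p \neq \bot$, which is permanently marked). Then firing $t$ does not remove any token from $p$, so $\vec{m}''(p) \geq \vec{m}(p) > 0$, hence $p \in \places{\vec{m}''}$, and we take the empty path of weight $0 \leq \lambda(t)$. The second case is $p \in \tin{t}$. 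Here we use the edge $(p, t, q) \in E$ for any $q \in \tout{t}$ with $q \neq p$; such an edge exists by the definition of $E$ provided $\tout{t} \not\subseteq \{p\}$. After firing $t$, the place $q$ receives $f(t,q) > 0$ tokens (if $q \in P$) or $q = \bot$ is permanently marked, so in either case $q \in \places{\vec{m}''}$, and the single edge $(p,t,q)$ gives a path of weight $\lambda(t)$ from $p$ to $q \in \places{\vec{m}''}$. The one remaining subtlety is the situation where $p \in \tin{t}$ but the only output place of $t$ is $p$ itself (so no outgoing edge from $p$ labelled $t$ exists in $G_\text{struct}$): then $t$ both consumes from and produces into $p$, so since $\vec{m}(p) \geq f(p,t) \geq 1$ and $\vec{m}''(p) = \vec{m}(p) + f(t,p) - f(p,t)$ with $f(t,p) > 0$, we still have $\vec{m}''(p) > 0$, i.e.\ $p \in \places{\vec{m}''}$, and the empty path works.

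Putting the cases together: for every $p \in \places{\vec{m}}$ we obtain a path in $G_\text{struct}(\pn)$ of weight at most $\lambda(t)$ from $p$ to some $q_p \in \places{\vec{m}''}$. Applying the induction hypothesis to $\vec{m}'' \reach{\sigma'} \vec{m}'$ at the place $q_p \in \places{\vec{m}''}$ yields a path of weight at most $\lambda(\sigma')$ from $q_p$ to some $r \in \places{\vec{m}'}$. Concatenating these two paths gives a path from $p$ to $r \in \places{\vec{m}'}$ of weight at most $\lambda(t) + \lambda(\sigma') = \lambda(\sigma)$, as required. I expect the main obstacle to be the careful bookkeeping of the degenerate cases in the inductive step — transitions with empty input or output sets (handled by routing through $\bot$), and the self-loop case $\tin{t} \ni p \in \tout{t}$ where no $G_\text{struct}$-edge leaves $p$ under label $t$ — making sure that in each of these the token argument still places $p$ (or a suitable successor) in $\places{\vec{m}''}$. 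Everything else is a routine concatenation of paths and an accounting of weights.
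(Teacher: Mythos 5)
Your proof is correct and follows essentially the same route as the paper's: induction on $|\sigma|$, peeling off the first transition $t$ and either keeping $p$ (empty path) or traversing a single edge labelled $t$ into a place marked after firing $t$, then concatenating with the path given by the induction hypothesis. The only cosmetic difference is that you split cases on whether $p \in \tin{t}$ (which forces you to treat the self-loop sub-case $\tout{t} = \{p\}$ separately), whereas the paper splits on whether $p$ is still marked in the intermediate marking $\vec{m}''$, which absorbs that sub-case automatically.
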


\begin{proof}
  We proceed by induction on $|\sigma|$. If $|\sigma| = 0$, then the
  claim follows immediately with the empty path. Assume $\sigma = t
  \tau$ with $t \in T$ and $\tau \in T^*$. There is some marking
  $\vec{m}''$ such that $\vec{m} \reach{t} \vec{m}'' \reach{\tau}
  \vec{m}'$. By induction hypothesis, for every $r \in
  \places{\vec{m}''}$, there exists a path $\path_r$ of weight at most
  $\lambda(\tau)$ from $r$ to some $q \in \places{\vec{m}'}$ in
  $G_\text{struct}(\pn)$. Let $p \in \places{\vec{m}}$. We must
  exhibit a path from $p$.

  If $p \in \places{\vec{m}''}$, then we are done as path $\path_p$
  satisfies $\mu(\path_p) \leq \lambda(\tau) \leq
  \lambda(\sigma)$. So, assume $p \not\in \places{\vec{m}''}$. By
  definition of $E$, we have $e \defeq (p, t, r) \in E$ for some $r
  \in \supp{\vec{m}''}$. Thus, path $\path \defeq e \path_r$ satisfies
  the claim since $\lambda(\path) = \lambda(t) + \mu(\path_r) \leq
  \lambda(t) + \lambda(\tau) = \lambda(\sigma)$.\qed
\end{proof}

\begin{proposition}\label{prop:struct:admissible}
  It is the case that $d_\text{struct}$ is a distance under-approximation.
\end{proposition}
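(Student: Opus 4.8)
The plan is to verify the three defining properties of a distance under-approximation in turn: the bound $d_\text{struct} \le \dist_\pn$, the triangle inequality, and effectiveness. The main work is already done by \Cref{lem:abstr}; what remains is mostly bookkeeping, so I do not expect a genuine obstacle, only a couple of points that need a little care (see the last paragraph).

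First, for the under-approximation bound, I would appeal directly to \Cref{lem:abstr}. If $\dist_\pn(\vec{m}, \vec{m}') = \infty$ there is nothing to prove, so assume $\vec{m} \reach{\sigma} \vec{m}'$ for some $\sigma \in T^*$ with $\lambda(\sigma) = \dist_\pn(\vec{m}, \vec{m}')$. By \Cref{lem:abstr}, for every $p \in \places{\vec{m}}$ there is a path in $G_\text{struct}(\pn)$ of weight at most $\lambda(\sigma)$ from $p$ to some $q \in \places{\vec{m}'}$, hence $\kappa_{\vec{m}'}(p) \le \lambda(\sigma)$. Taking the maximum over $p \in \places{\vec{m}}$ gives $d_\text{struct}(\vec{m}, \vec{m}') \le \lambda(\sigma) = \dist_\pn(\vec{m}, \vec{m}')$.

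Second, for the triangle inequality, I would pick $p \in \places{\vec{m}}$ attaining $d_\text{struct}(\vec{m}, \vec{m}'') = \kappa_{\vec{m}''}(p)$, then choose $q \in \places{\vec{m}'}$ with $\dist_{G_\text{struct}}(p, q) = \kappa_{\vec{m}'}(p)$ and $r \in \places{\vec{m}''}$ with $\dist_{G_\text{struct}}(q, r) = \kappa_{\vec{m}''}(q)$. Using the triangle inequality for shortest-path distances in $G_\text{struct}(\pn)$,
\[
\kappa_{\vec{m}''}(p) \;\le\; \dist_{G_\text{struct}}(p, r) \;\le\; \dist_{G_\text{struct}}(p, q) + \dist_{G_\text{struct}}(q, r) \;=\; \kappa_{\vec{m}'}(p) + \kappa_{\vec{m}''}(q).
\]
Since $p \in \places{\vec{m}}$, we have $\kappa_{\vec{m}'}(p) \le d_\text{struct}(\vec{m}, \vec{m}')$, and since $q \in \places{\vec{m}'}$, we have $\kappa_{\vec{m}''}(q) \le d_\text{struct}(\vec{m}', \vec{m}'')$, so the right-hand sides sum to the desired bound. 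Effectiveness is then immediate: $G_\text{struct}(\pn)$ is a finite weighted graph whose nodes are $P \cup \{\bot\}$ and whose edges are read off from $f$, so all-pairs distances $\dist_{G_\text{struct}}$ can be computed (e.g.\ by Floyd–Warshall), and $\kappa_{\vec{m}'}$ and $d_\text{struct}$ are obtained from them by finitely many $\min$/$\max$ operations.

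The two things I would be careful about, rather than true obstacles, are: (i) the sink place $\bot$ — it always lies in both $\places{\vec{m}}$ and $\places{\vec{m}'}$, so $\places{\vec{m}}$ is never empty (the $\max$ is well defined) and $\kappa_{\vec{m}'}(\bot) = \dist_{G_\text{struct}}(\bot,\bot) = 0$ never creates a problematic term in the bound of the first step; and (ii) the graph-distance triangle inequality and the arithmetic in the displayed inequality must be read in $\Qnon \cup \{\infty\}$, which is harmless since $\infty$ dominates every finite value on both sides.
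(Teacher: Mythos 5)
Your proposal is correct and follows essentially the same route as the paper's proof: the under-approximation bound is read off from \Cref{lem:abstr} applied to a shortest firing sequence, the triangle inequality is the standard max--min composition argument via the shortest-path triangle inequality in $G_\text{struct}(\pn)$ (your choice of witnesses is a slightly streamlined version of the paper's, using directly that $\kappa_{\vec{m}'}(p) \le d_\text{struct}(\vec{m},\vec{m}')$ for any $p \in \places{\vec{m}}$), and effectiveness via precomputed all-pairs distances matches the paper's Dijkstra-based argument. Your two cautionary remarks about $\bot$ and arithmetic with $\infty$ are exactly the right points to watch and are handled correctly.
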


\begin{proof}
  Let $\vec{m}, \vec{m}', \vec{m}'' \in \N^P$ be markings. We prove
  admissibility by establishing each property.

  \medskip\noindent\emph{Distance under-approximation.} We must show
  that $d_\text{struct}(\vec{m}, \vec{m}') \leq \dist_\pn(\vec{m},
  \vec{m}')$. Assume the latter differs from $\infty$, as we are
  otherwise done. Let $\sigma \in T^*$ be a shortest firing sequence
  such that
  \[
  \vec{m} \reach{\sigma} \vec{m}'.
  \]
  Let $p \in \places{\vec{m}}$ maximize $\kappa_{\vec{m}'}(p)$. By
  \Cref{lem:abstr}, $G_\text{struct}(\pn)$ has a path $\path$ of
  weight at most $\lambda(\sigma)$ from $p$ to some $q \in
  \places{\vec{m}'}$. Thus, $d_\text{struct}(\vec{m}, \vec{m}') =
  \kappa_{\vec{m}'}(p) \leq \dist_{G_\text{struct}(\pn)}(p, q) \leq
  \mu(\path) \leq \lambda(\sigma) = \dist_\pn(\vec{m}, \vec{m}')$.

  \medskip\noindent\emph{Triangle inequality.} We show
  $d_\text{struct}(\vec{m}, \vec{m}'') \leq d_\text{struct}(\vec{m},
  \vec{m}') + d_\text{struct}(\vec{m}', \vec{m}'')$. Assume that the
  right-hand side does not equal $\infty$ as we are otherwise
  done. Let $p, p' \in \places{\vec{m}}$ and $q \in \places{\vec{m}'}$
  respectively maximize $\kappa_{\vec{m}'}(p)$,
  $\kappa_{\vec{m}''}(p')$ and $\kappa_{\vec{m}''}(q)$.

  Let $q' \in \places{\vec{m}'}$ and $r \in \places{\vec{m}''}$ be
  such that $\kappa_{\vec{m}'}(p') = \dist_{G_\text{struct}}(p', q')$
  and $\kappa_{\vec{m}''}(q') = \dist_{G_\text{struct}}(q', r)$. Note
  that they are well-defined by $\kappa_{\vec{m}'}(p) \neq \infty$ and
  $\kappa_{\vec{m}''}(q) \neq \infty$.

  We have:
  \begin{alignat*}{3}
    &&& d_\text{struct}(\vec{m}, \vec{m}'') \\
    &&=\ & \kappa_{\vec{m}''}(p')\
    && \text{(by def.\ of $d_\text{struct}$)} \\
    &&\leq\ & \dist_{G_\text{struct}}(p', r)
    && \text{(by $r \in \places{\vec{m}''}$ and min.\ of
      $\kappa_{\vec{m}''}(p')$)} \\
    &&\leq\ & \dist_{G_\text{struct}}(p', q') + \dist_{G_\text{struct}}(q', r)\
    && \text{(by the triangle inequality)} \\
    &&=\ & \kappa_{\vec{m}'}(p') + \kappa_{\vec{m}''}(q') \\
    &&\leq\ & \kappa_{\vec{m}'}(p') + \kappa_{\vec{m}''}(q)
    && \text{(by $q' \in \places{\vec{m}'}$ and max.\ of $q$)} \\
    &&\leq\ & \kappa_{\vec{m}'}(p) + \kappa_{\vec{m}''}(q)
    && \text{(by $p' \in \places{\vec{m}}$ and max.\ of $p$)} \\
    &&=\ & d_\text{struct}(\vec{m}, \vec{m}') + d_\text{struct}(\vec{m}',
    \vec{m}'')
    && \text{(by def.\ of $d_\text{struct}$)}.
  \end{alignat*}

  \medskip\noindent\emph{Effectiveness.} The structural abstraction
  $G_\text{struct}(\pn)$ can be precomputed in linear time from $\pn$,
  and $\dist_{G_\text{struct}(\pn)}(p, q)$ can then be precomputed in
  polynomial time using \eg\ Dijkstra's algorithm.
  After these steps, $d_\text{struct}(\vec{m}, \vec{m}')$ can
  be evaluated in time $\mathcal{O}(|\places{\vec{m}}| \cdot
  |\places{\vec{m}'}|)$.\qed
\end{proof}

Let us stress that $d_\text{struct}(\vec{m}, \vec{m}')$ yields a crude
estimation of $\dist_\pn(\vec{m}, \vec{m}')$. Indeed, its value is
always upper bounded by $|P| \cdot \max\{\lambda(t) : t \in T\}$,
while the actual distance could be arbitrarily large in $\vec{m}$ and
$\vec{m}'$. Nevertheless, it is lightweight since it enables
pre-computations. This makes it useful in particular for reachability
graphs with short paths but large branching factors.

For example, instances from the \textsc{sypet} suite have a large
branching factor. They have between 23 and 187 unguarded transitions.
Most markings tend to enable some guarded transitions as well, so the
average branching factor is larger. In particular, the branching
factor of initial markings ranges from 30 to 300.\footnote{Chess and
  Go respectively have an average branching factor of $\sim$35 and
  $\sim$350~\cite{RN09}.}

\begin{figure}[h]
\hspace*{\fill}%
\begin{minipage}[t]{0.45\textwidth}
  \begin{center}
    \vspace{0pt}
    \plotcanvaslonger{%
      time $t$ in seconds}{%
      number of instances decided \\ in $\leq t$ seconds}{%

\toolplotwithtotalcount{A* in competitive mode with heuristic QMarkingEQGurobi}{diamond*}{(100, 0)(200, 4)(400, 12)(800, 18)(1600, 21)(3200, 22)(6400, 25)(12800, 28)(102400, 29)(204800, 30)(600000, 30)}{600000}{30}{\textcolor{colA* in competitive mode with heuristic QMarkingEQGurobi}{30/30}};

\toolplotwithtotalcount{best-first in competitive mode with heuristic syntactic}{pentagon*}{(100, 0)(200, 10)(400, 17)(800, 23)(1600, 25)(6400, 27)(102400, 28)(600000, 28)}{600000}{28}{\textcolor{colbest-first in competitive mode with heuristic syntactic}{28/30}};

\toolplotwithtotalcount{Lola}{+}{(100, 0)(200, 1)(400, 2)(800, 5)(12800, 6)(25600, 8)(204800, 9)(409600, 16)(600000, 16)}{600000}{16}{\textcolor{colLola}{16/30}};

\toolplotwithtotalcount{KReach}{o}{(100, 0)(600000, 0)}{600000}{0}{\textcolor{colKReach}{0/30}};

    }
    \begin{tikzpicture}[very thick, scale=\plotscale,
                        every node/.style={scale=\plotscale}]
      \begin{customlegend}[legend columns=2,
          legend style={draw=none, column sep=1ex, cells={align=center}, anchor=center},
          legend entries={\textsc{FF(\astar, $\mathcal{D}_{\Q}$)}, \textsc{FF(GBFS, $\mathcal{D}_\text{struct}$)}, \textsc{LoLA}, \textsc{KReach}}]
\addlegend{A* in competitive mode with heuristic QMarkingEQGurobi}{diamond*}
\addlegend{best-first in competitive mode with heuristic syntactic}{pentagon*}
\addlegend{Lola}{+}
        \addlegend{KReach}{o}
      \end{customlegend}
    \end{tikzpicture}%
  \end{center}
\end{minipage}%
\hfill
\begin{minipage}[t]{0.45\textwidth}
  \centering
  \vspace{0pt}
  \begin{tikzpicture}[scale=\plotscale, every node/.style={scale=0.9}]%
      \begin{loglogaxis}[%
          xmin=50,
          xmax=1000000,
          ymin=50,
          ymax=1000000,
          width=\textwidth,
          xlabel style={align=center},
          xlabel={time in seconds \\ for \textsc{FF(\astar, $\mathcal{D}_{\Q}$)}},
          xticklabels={0, 0.1, 1, 10, 100, 1000, 10000},
          yticklabels={0, 0.1, 1, 10, 100, 1000, 10000},
          ylabel style={align=center},
          ylabel=time in seconds \\  for \textsc{FF(GBFS, $\mathcal{D}_\text{struct}$)},
          legend style={
            at={(0, 0.65)},
            anchor=west},
          reverse legend,
        ]%
        \addplot[mark=none] coordinates {(100,100) (600000, 600000)};
          \addplot[gray, mark=none] coordinates {(600000,100) (600000, 600000)};
          \addplot[gray, mark=none] coordinates {(100,600000) (600000, 600000)};
        
\addplot[only marks, red] coordinates {
};

\addplot[only marks, blue, mark={+}] coordinates {(7104.154348373413, 722.2304344177246) 
(1038.7821197509766, 415.5278205871582) 
(213.79947662353516, 160.92300415039062) 
(172.26862907409668, 142.68732070922852) 
(638.8018131256104, 322.39556312561035) 
(204.7886848449707, 147.34315872192383) 
(212.8469944000244, 206.15601539611816) 
(169.34537887573242, 142.81010627746582) 
(481.6296100616455, 269.5479393005371) 
(709.1202735900879, 5322.7269649505615) 
(69038.33341598511, 600000) 
(105580.61003684998, 600000) 
(173.36416244506836, 145.31707763671875) 
(645.2891826629639, 5260.328769683838) 
(266.28661155700684, 178.07960510253906) 
(749.7358322143555, 287.6255512237549) 
(5251.090049743652, 816.3259029388428) 
(226.36795043945312, 166.17274284362793) 
(10305.615425109863, 76216.77875518799) 
(5274.735927581787, 820.3275203704834) 
(4591.395139694214, 607.0621013641357) 
(216.02535247802734, 137.85099983215332) 
(433.22110176086426, 255.39278984069824) 
(8486.419677734375, 665.6458377838135) 
(254.04644012451172, 160.8717441558838) 
(880.650520324707, 438.4024143218994) 
(375.60081481933594, 227.68211364746094) 
(199.052095413208, 179.46529388427734) 
(1376.4584064483643, 263.37194442749023) 
(2339.8454189300537, 488.19947242736816) 
};

      \end{loglogaxis}%
    \end{tikzpicture}%
\end{minipage}
\caption{Results on the \textsc{sypet} suite with a time limit of 600
  seconds. \emph{Left:} Cumulative number of instances shown
  reachable. \emph{Right:} Performance comparison per instance of
  \textsc{FastFoward} with two different schemes. Marks on the
  \textcolor{gray!80!black}{gray} lines denote
  timeouts.}\label{fig:sypet:appendix}
\end{figure}
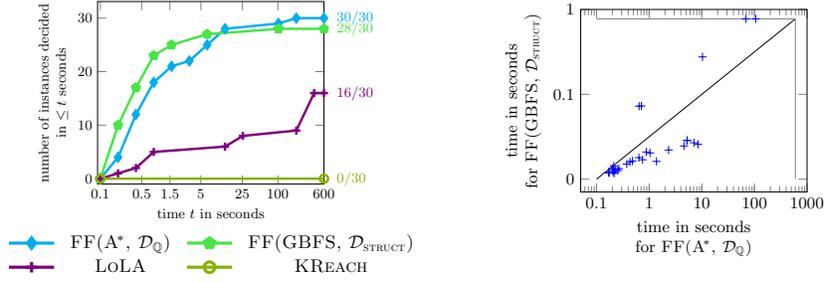

Let $\mathcal{D}_\text{struct}$ be distance under-approximation scheme
obtained from the structural distance. This scheme is not unbounded,
but can still be used with GBFS without termination
guarantee. \Cref*{fig:sypet:appendix} compares the performance of
\textsc{FastForward} using \astar\ with $\mathcal{D}_\Q$ and using
GBFS with $\mathcal{D}_\text{struct}$ on a time limit of 600
seconds. The former is faster on most instances, but it is vastly
outperformed by \astar\ on a few instances. An explanation is provided
by the large branching factor and short paths, and how these emphasize
the characteristics of the different approaches. Note that the
structural abstraction can be precomputed. On the other hand,
\astar\ requires computing the heuristic on each successor before the
next node is chosen for expansion. It thus is at a slight disadvantage
on instances where a shortest witness is so short that it is found
rather quickly even with the coarse structural distance. Its advantage
is on the instances where the length of a shortest witness is at the
upper end of the range. There, the large branching factor fully comes
into play and a search algorithm must more aggressively discard parts
of the search space.

\end{document}